\numberwithin{equation}{section}
\newtheorem{theorem}{Theorem}[section]
\newtheorem{lemma}[theorem]{Lemma}
\newtheorem{prop}[theorem]{Proposition}
\newtheorem{coro}[theorem]{Corollary}
\newtheorem{claim}[theorem]{Claim}
\theoremstyle{definition}
\newtheorem{definition}[theorem]{Definition}
\newtheorem{remark}[theorem]{Remark}
\newenvironment{claimproof}[1][Proof of Claim]{\noindent \underline{#1.} }{\hfill$\diamondsuit$}
\theoremstyle{plain}
\newcommand{\bbC}{{\mathbbm{C}}}
\newcommand{\bbN}{{\mathbbm{N}}}
\newcommand{\bbR}{{\mathbbm{R}}}
\newcommand{\bbT}{{\mathbbm{T}}}
\newcommand{\bbZ}{{\mathbbm{Z}}}
\newcommand{\Z}{{\mathbbm{Z}}}
\newcommand{\C}{{\mathbbm{C}}}
\newcommand{\scrF}{{\mathscr{F}}}
\newcommand{\scrH}{{\mathscr{H}}}
\newcommand{\set}[1]{\left\{#1\right\}}
\newcommand{\tlambda}{{\widetilde{\lambda}}}
\newcommand{\tf}{\widetilde{f}}
\newcommand{\subscript}[2]{$#1 _ #2$}
\definecolor{purple}{rgb}{.5,0,1}
\definecolor{green}{rgb}{0,.5,0}
\def\cold#1{\textcolor{blue}{#1}}
\begin{document}
\title[Irreducibility of the Bloch Variety]{Irreducibility of the Bloch Variety for Finite-Range Schr\"odinger Operators}
\author[J.\ Fillman]{Jake Fillman}
\email{fillman@txstate.edu}
\address{Department of Mathematics, Texas State University, San Marcos, TX 78666, USA}
\thanks{J.\ F.\ was supported in part by Simons Foundation Collaboration Grant \#711663. W.\ L.\ and R.\ M. were partially  supported by NSF  DMS-2000345 and DMS-2052572. }

\author[W.\ Liu]{Wencai Liu}
\email{liuwencai1226@gmail.com; wencail@tamu.edu}
\address{Department of Mathematics, Texas A\&M University, College Station TX, 77843, USA.}

\author[R.\ Matos]{Rodrigo Matos}
\email{matosrod@tamu.edu}
\address{Department of Mathematics, Texas A\&M University, College Station TX, 77843, USA.}

\maketitle

\begin{abstract}
We study the Bloch variety of discrete Schrödinger operators associated with a complex periodic potential and a general finite-range interaction, showing that the Bloch variety is irreducible for a wide class of lattice geometries in arbitrary dimension. Examples include the triangular lattice and the extended Harper lattice.
\end{abstract}

\section{Introduction}

\subsection{Setting and Main Theorem}
We will study periodic finite-range Schr\"{o}dinger operators of the form
\begin{equation} \label{eq:A+Vdef} H  = A + V, \end{equation}
acting in $\ell^2(\bbZ^d)$, where $V$ is periodic and $A$ is a Toeplitz operator given by
\[ [A\psi]_n = \sum_{m \in \bbZ^d} a_{n-m} \psi_m.\]
Here, $\{a_n\}_{n\in\bbZ^d}$ is finitely supported
and $V$ will as usual denote both the potential $V:\bbZ^d \to \bbC$ and the corresponding multiplication operator $[V\psi]_n = V_n\psi_n$. 
We say that $V$ is $q$-periodic for $q=(q_1,\ldots,q_d) \in \bbN^d$ if $V_{n+q_je_j} = V_n$ for all $n \in \bbZ^d$ and each $1 \le j \le d$, where $e_j$ denotes the standard $j$th basis vector.

In particular, let us note that the approach discussed herein does not rely on reality of the potential or self-adjointness of $A$. The case in which
\[ a_n = \begin{cases} - 1 & n = \pm e_j \text{ for some } 1 \le j \le d \\
0 & \text{otherwise} \end{cases} \]
corresponds to $A = -\Delta$, the discrete Laplacian.

Our main result is irreducibility of the Bloch variety for all operators of the form \eqref{eq:A+Vdef} subject to a suitable condition on $A$. 
In particular, under mild assumptions on $\{a_n\}_{n \in \bbZ^d}$, the result holds universally for all periodic $V$, including complex-valued potentials.
We will define the Bloch variety precisely later in the manuscript (see Section~\ref{sec:definitions}); for now, the reader may think of it as a relation between the energy, $\lambda$, and the quasi-momentum, $k$, which is given by the zero set of a function $\mathcal{P}(z,\lambda)$ that is a polynomial in $\lambda$ and a Laurent polynomial in $z = e^{2\pi i k}$.

Let us describe some of the main objects and assumptions used in this work. Starting with $A$ we generate the Laurent polynomial
\begin{equation}\label{eq:pdefinition}
p(z) = p_A(z) = \sum_{n \in \bbZ^d} a_n z^{-n},
    \end{equation}
where we employ the standard multi-index notation $z^n = z_1^{n_1}\cdots z_d^{n_d}$.

Let us state our assumptions here in a moderately informal manner. For further definitions, details, and a more precise account, we refer the reader to Section~\ref{sec:mainresult}, where we define the component of lowest degree and the fundamental domain $W$. Let $h$ denote the lowest degree component of $p$ in the sense that $p(z)=h(z)+\text{higher order terms}$. Given  $q = (q_1,\ldots,q_d) \in \bbN^d$ and $n = (n_1,\ldots,n_d) \in \bbZ^d$, the vector $\mu_n=(\mu_n^1,\ldots,\mu_n^d)$ is defined by
\[\mu_n^j = e^{2\pi i n_j/q_j}, \quad 1 \le j \le d.\]
Given $z \in \bbC^d$, we define $\mu_n \odot z = (\mu_n^1z_1,\ldots,\mu_n^dz_d)$.
Our main assumptions are the following:
\begin{enumerate}
\item[($A_1$)] The degree of $h$ is negative.
\item[($A_2$)] The polynomials $h(\mu_n\odot z)$, $n \in W$, are pairwise distinct (cf.~\eqref{eq:fundcelldef}, \eqref{action}, and \eqref{eq:characterActionDef}).
\end{enumerate}

\begin{theorem}\label{t:blochIrr}
	Let $q=(q_1,q_2,\ldots,q_d)$ be given and let $V$ be $q$-periodic.
If $p_A$ satisfies Assumptions~{\rm\ref{assump1}} and {\rm\ref{assump2}}, then the Bloch variety of $H = A+V$ is irreducible modulo periodicity.
\end{theorem}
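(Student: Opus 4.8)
The plan is to realize the Bloch variety as a branched cover of the quasimomentum torus and to establish irreducibility by showing that the associated monodromy acts transitively on a generic fiber. Write $Q = q_1\cdots q_d = \abs{W}$. After the Floquet reduction I would fix the genuine quasimomentum variable $w$, with $w_j = z_j^{q_j}$, and note that the $Q$ sheets lying over $w$ correspond to the fine-momentum points $\mu_n\odot z$, $n\in W$, where $z$ is any choice with $z_j^{q_j}=w_j$. In this basis the free operator is diagonal, $A(w)=\diag\bigl(p(\mu_n\odot z)\bigr)_{n\in W}$, while the $q$-periodic potential contributes a matrix $\widehat V$ bounded uniformly in $w$; thus $\mathcal{P}(w,\lambda)=\det\bigl(\lambda - A(w) - \widehat V\bigr)$ is monic of degree $Q$ in $\lambda$ and Laurent-polynomial in $w$. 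Since $\mathcal{P}$ is monic in $\lambda$, any factorization has monic-in-$\lambda$ factors with no $w$-only part, so irreducibility of $\mathcal{P}$ as a polynomial in $(w,\lambda)$ is exactly the assertion that the Bloch variety is irreducible modulo the $z\mapsto\mu_n\odot z$ periodicity. Equivalently, writing $\pi$ for the projection of the variety onto the $w$-torus, a degree-$Q$ branched cover, it suffices to show that the monodromy action of $\pi$ on a generic fiber is transitive.

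The separation of the sheets comes from a degeneration governed by $h$. By Assumption~$(A_1)$ the lowest-degree component $h$ is homogeneous of negative degree, so along $z\to 0$ (say $z = t\zeta$ for generic $\zeta$ and $t\to 0$) every diagonal entry blows up with leading behavior $p(\mu_n\odot z) = h(\mu_n\odot z)\bigl(1+o(1)\bigr)\to\infty$, while $\widehat V$ stays bounded. Standard perturbation theory for a diagonal matrix with widely separated entries then yields eigenvalues $\lambda_n(w) = h(\mu_n\odot z)\bigl(1+o(1)\bigr)$. Assumption~$(A_2)$, the pairwise distinctness of the polynomials $h(\mu_n\odot z)$, guarantees that for a generic direction the leading terms are genuinely distinct, so in a punctured neighborhood of the boundary point $w=0$ the $Q$ sheets are simple and may be labeled unambiguously by $n\in W$ through these asymptotics.

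The monodromy is then read off from the covering $z\mapsto w$ with $w_j=z_j^{q_j}$, whose deck group is $G=\prod_{j}\bbZ/q_j\bbZ$ acting by $z\mapsto\mu_m\odot z$; under the identification $W\cong G$ this is translation $n\mapsto n+m$. Concretely, I would let $w$ run over the small loop sending $w_j\mapsto e^{2\pi i s}w_j$ for $s\in[0,1]$ with the remaining coordinates fixed and $\abs{w}$ small. This loop stays in the region where the sheets are simple, and on continuation it maps $z\mapsto \mu_{e_j}\odot z$, so the diagonal entry $p(\mu_n\odot z)$ is carried to $p(\mu_{n+e_j}\odot z)$ and the monodromy permutes the sheets by $n\mapsto n+e_j$. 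Running over $1\le j\le d$ produces all generators of $G$, whose translation action on $W$ is transitive; hence the monodromy group acts transitively on the generic fiber, the branched cover is connected, and $\mathcal{P}$ is irreducible, which is the claim.

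The hard part will be the sheet-tracking step of the middle paragraph: one must make the eigenvalue asymptotics rigorous in the non-self-adjoint setting, where spectral-theoretic tools are unavailable, and confirm that the sheets remain simple and separately analytic along the entire monodromy loop rather than merely at one base point. This is precisely where the two hypotheses do their work: Assumption~$(A_1)$ forces the kinetic part to dominate the fixed potential along the degeneration, and Assumption~$(A_2)$ keeps the $Q$ branches asymptotically distinct so that the deck transformation is faithfully recorded as a transitive permutation of the fiber. A secondary point to verify is that no spurious factor of $\mathcal{P}$ is supported on the branch locus, which is ruled out by monicity in $\lambda$.
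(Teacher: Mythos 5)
Your proposal is correct in outline, but it takes a genuinely different route from the paper's. The shared part is the Floquet reduction of Proposition~\ref{prop:floquetTransf}: both arguments reduce the theorem to irreducibility of the Laurent polynomial $\mathcal{P}(w,\lambda)$ determined by $\widetilde{\mathcal{P}}(z,\lambda)=\det(D^z+B-\lambda I)$ and $w_j=z_j^{q_j}$, and both deploy Assumption~\ref{assump1} to make the kinetic terms dominate in the degeneration $z\to 0$ and Assumption~\ref{assump2} to keep the $Q$ branches distinguishable there. The mechanisms differ, however. You argue geometrically: near $w=0$ the sheets are simple and labeled by $n\in W$ through the asymptotics $\lambda_n\approx h(\mu_n\odot z)$, and the deck transformation $z\mapsto\mu_{e_j}\odot z$ of the finite cover $z\mapsto w$ carries sheet $n$ to sheet $n+e_j$, so the monodromy acts transitively on a generic fiber. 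The paper argues purely algebraically, with no limits or perturbation theory: after substituting $\tlambda=\lambda^{-1}$ and clearing denominators, a hypothetical factorization of $\mathcal{P}$ induces a factorization of the lowest-degree component, which is, up to a monomial factor and a power substitution in $\tlambda$, the lift $a(w,\tlambda)$ of $\prod_{n\in W}r_n(z,\tlambda)$; Lemmas~\ref{rnirred} and~\ref{airred} show $a$ is irreducible, and a count of $\tlambda$-degrees using Lemma~\ref{Lemma:irredmeets} yields the contradiction. The paper's route is designed precisely to eschew asymptotic statements about the variety; yours recovers the classical covering-space picture (in the spirit of the B\"attig and Gieseker--Kn\"orrer--Trubowitz compactification arguments cited there) and makes the transitive action of $W$ visible, at the cost of the analytic work you flag.

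If you carry the plan out, three points need attention. (i) Transitivity of the monodromy alone does not exclude $\mathcal{P}=f^k$ with $k>1$: you must also know that $\mathcal{P}$ is squarefree over $\bbC(w)$. Your simple-sheet asymptotics supply exactly this (some fiber has $Q$ distinct points), but the step has to be stated; then monicity in $\lambda$ together with Gauss's lemma over the UFD $\bbC[w,w^{-1}]$ upgrades irreducibility over $\bbC(w)$ to irreducibility as a Laurent polynomial and rules out factors depending on $w$ alone (the role played by Lemma~\ref{Lemma:irredmeets} in the paper). (ii) Sheet separation must hold along the \emph{entire} loop, not only at the base point. This is a genericity argument: by Assumption~\ref{assump2} each coincidence set $\{\zeta:h(\mu_n\odot\zeta)=h(\mu_{n'}\odot\zeta)\}$ is a proper subvariety, so a generic base direction keeps your real one-dimensional loop away from all of them; the eigenvalue localization itself can be done with Gershgorin discs, which require no self-adjointness, so the non-self-adjointness you worry about is harmless. (iii) You conflate two quotients: the periodicity in the theorem is the translation $k\mapsto k+m$, $m\in\bbZ^d$, i.e., the deck group of the covering $k\mapsto w=e^{2\pi ik}$ with $k$ \emph{complex}, not the action $z\mapsto\mu_n\odot z$, which is the deck group of $z\mapsto w$ and is already quotiented out when one passes from $\widetilde{\mathcal{P}}$ to $\mathcal{P}$ via Lemma~\ref{liftlemma}. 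The bridge from irreducibility of the polynomial $\mathcal{P}$ to irreducibility of $B(H)$ modulo periodicity is the covering-space fact recorded in the remark of Section~\ref{sec:mainresult}: components of the preimage of an irreducible variety under a normal covering are permuted transitively by the deck group.
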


\begin{remark}
Let us make a few comments about Theorem~\ref{t:blochIrr}.
\begin{enumerate}
\item  The precise definitions of the Bloch variety and irreducibility modulo periodicity are given in Section~\ref{sec:definitions}. To prove Theorem \ref{t:blochIrr}, we use standard Floquet theory to obtain a Laurent polynomial  $\mathcal{P}(z,\lambda)$ with the property that $(k,\lambda)$ belongs to the Bloch variety if and only if $\mathcal{P}(e^{2\pi i k},\lambda) = 0$. We show that $\mathcal{P}$ is irreducible as a function of $z$ and $\lambda$. Since the Bloch variety is defined as a set of pairs $(k,\lambda)$, it is only irreducible after one quotients out the relevant $\Z^d$ action. This is indeed necessary as shown by the free Laplacian, see Equation~(1.22) and Figure~4 in \cite{Kuchment2016BAMS}. The main idea of the proof is to reduce from the operator $A+V$ to $A$ by focusing on the lowest-degree component of $\mathcal{P}$ (after a suitable change of variables). We show that reducibility combined with Assumptions~\ref{assump1} and \ref{assump2} would imply mutually contradictory properties of the lowest-degree component.
\medskip

\item Assumption~\ref{assump1} only depends on $p_A$, whereas Assumption \ref{assump2} depends on $p_A$ and $q$ (via the action of the vectors $\mu_n$). As the reader can see, our proof does not require $V$ to be real-valued.
\medskip 

\item  The strength of the result comes from the generality of the operators under consideration. For instance, this can handle operators in higher dimensions on rather general graphs; specifically, one can handle $\Z^d$-periodic graphs for which $\Z^d$ acts transitively on vertices (see Remark~\ref{rem:zdgraph} for further details). We emphasize that our result does not necessarily imply irreducibility of the Fermi varieties associated with $A+V$.
\end{enumerate}
\end{remark}

 Theorem~\ref{t:blochIrr} is the main motivation for this work. It will follow from a more general result formulated in Theorem~\ref{mainthm} below. 
 
The above assumptions are satisfied and straightforward to verify in many cases of interest. To illustrate scope of applications, we enumerate some corollaries.

We first note that Theorem \ref{t:blochIrr}   provides a direct proof of the irreduciblity of the Bloch variety  for all discrete Schr\"odinger operators on $\bbZ^d$.
\begin{coro} \label{coro:square}
If $A = -\Delta$ denotes the Laplacian on $\ell^2(\Z^d)$, then for any periodic $V$, the Bloch variety of $A+V$ is irreducible modulo periodicity.
\end{coro}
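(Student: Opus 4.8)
The plan is to obtain Corollary~\ref{coro:square} directly from Theorem~\ref{t:blochIrr} by checking that the Laplacian $A=-\Delta$ satisfies Assumptions~\ref{assump1} and \ref{assump2}. All of the analytic content lives in the theorem, so the task reduces to a short computation with the Laurent polynomial $p_A$.

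First I would write $p_A$ explicitly. From the coefficients $a_{\pm e_j}=-1$ and $a_n=0$ otherwise, formula \eqref{eq:pdefinition} gives
\[
p_A(z)=\sum_{n\in\bbZ^d}a_n z^{-n}=-\sum_{j=1}^d\left(z_j+z_j^{-1}\right).
\]
The monomials appearing here have total degree $+1$ (the terms $-z_j$) and $-1$ (the terms $-z_j^{-1}$), so the lowest-degree component is
\[
h(z)=-\sum_{j=1}^d z_j^{-1},
\]
which is homogeneous of total degree $-1$. Since $-1<0$, Assumption~\ref{assump1} holds immediately.

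Next I would verify Assumption~\ref{assump2}. Applying the character action with $\mu_n^j=e^{2\pi i n_j/q_j}$, I compute
\[
h(\mu_n\odot z)=-\sum_{j=1}^d(\mu_n^j z_j)^{-1}=-\sum_{j=1}^d e^{-2\pi i n_j/q_j}\,z_j^{-1}.
\]
Thus $h(\mu_n\odot z)$ is determined by its coefficient vector $\left(e^{-2\pi i n_1/q_1},\ldots,e^{-2\pi i n_d/q_d}\right)$, and two such polynomials coincide precisely when $n_j\equiv n_j'\pmod{q_j}$ for every $j$, i.e.\ when $n$ and $n'$ represent the same class in $\bbZ^d/(q_1\bbZ\oplus\cdots\oplus q_d\bbZ)$. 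Because $W$ is a fundamental domain it contains exactly one representative of each such class, so distinct $n\in W$ yield distinct polynomials $h(\mu_n\odot z)$, establishing Assumption~\ref{assump2}. With both assumptions verified, Theorem~\ref{t:blochIrr} applies and yields irreducibility modulo periodicity.

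I do not expect a genuine obstacle here, since the substance of the result is carried entirely by Theorem~\ref{t:blochIrr}. The only point requiring care is that the lowest-degree component $h$ must involve \emph{every} variable with a nonzero coefficient: if some $z_j^{-1}$ were absent, then shifting $n$ in the $j$th slot alone would leave $h(\mu_n\odot z)$ unchanged and Assumption~\ref{assump2} would fail. For the Laplacian this is guaranteed by the full symmetry of $-\Delta$ across all $d$ coordinate directions, so that the character values in all $d$ slots are genuinely exercised; once this is observed the verification is routine.
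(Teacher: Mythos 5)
Your proposal is correct and follows essentially the same route as the paper: compute $p_{-\Delta}(z)=-\sum_{j=1}^d\bigl(z_j+z_j^{-1}\bigr)$, read off the lowest-degree component $h(z)=-\sum_{j=1}^d z_j^{-1}$ with $\deg(h)=-1<0$, and note that the character action multiplies each coefficient of $z_j^{-1}$ by a distinct $q_j$th root of unity so that the $h(\mu_n\odot z)$, $n\in W$, are pairwise distinct. Your verification of Assumption~\ref{assump2} is in fact spelled out more explicitly than the paper's (which simply asserts it), and your closing observation about needing every $z_j^{-1}$ to appear in $h$ is a valid sanity check, not a gap.
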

The corollary above was already known via results about the Fermi variety -- see the discussion in Section~\ref{sec:definitions} for additional details and references. Thus, we supply an alternative argument, working directly on the Bloch variety.
 
More significantly, Theorem \ref{t:blochIrr} also enables one to prove irreducibility of the Bloch variety for other lattice geometries in arbitrary dimension. To remain concrete, we present a couple of two dimensional examples but the reader may readily recognize from the proofs that generalizations are possible (compare Remark~\ref{rem:zdgraph}).
\begin{coro} \label{coro:ehm}
If $A$ denotes the Laplacian on the extended Harper lattice, $q_1$ and $q_2$ are coprime, and $V$ is $q$-periodic, then the Bloch variety of $A+V$ is irreducible modulo periodicity.
\end{coro}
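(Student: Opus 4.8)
The plan is to check that the extended Harper Laplacian meets hypotheses $(A_1)$ and $(A_2)$ and then to quote Theorem~\ref{t:blochIrr}. First I would record the Laurent polynomial $p_A$. On the extended Harper lattice the nonzero hoppings connect each site to its nearest neighbors $\pm e_1,\pm e_2$ and to the two diagonals $\pm(e_1+e_2),\pm(e_1-e_2)$, so $p_A(z)$ is a sum of the eight monomials
\[
z_1^{\pm 1},\quad z_2^{\pm 1},\quad (z_1 z_2)^{\pm 1},\quad (z_1 z_2^{-1})^{\pm 1}
\]
with nonzero coefficients. Their total degrees are $\pm 1$, $\pm 1$, $\pm 2$, and $0$, so the unique monomial of minimal total degree is $z_1^{-1}z_2^{-1}$, coming from the hop in direction $e_1+e_2$. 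Consequently the lowest-degree component is the single monomial $h(z)=c\,z_1^{-1}z_2^{-1}$ with $c=a_{e_1+e_2}\neq 0$, and $\deg h=-2<0$, which is exactly $(A_1)$.

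For $(A_2)$ I would compute the character action on $h$. Because $h$ is a monomial,
\[
h(\mu_n\odot z)=c\,(\mu_n^1 z_1)^{-1}(\mu_n^2 z_2)^{-1}=c\,(\mu_n^1\mu_n^2)^{-1}\,z_1^{-1}z_2^{-1},
\]
so the polynomials $\{h(\mu_n\odot z)\}_{n\in W}$ are pairwise distinct if and only if the scalars
\[
\mu_n^1\mu_n^2=\exp\!\left(\frac{2\pi i\,(n_1 q_2+n_2 q_1)}{q_1 q_2}\right)
\]
are pairwise distinct as $n$ runs over the fundamental domain $W$. Equivalently, the map $n=(n_1,n_2)\mapsto n_1 q_2+n_2 q_1 \pmod{q_1 q_2}$ must be injective on $W$. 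This is where coprimality enters: if $n_1 q_2+n_2 q_1\equiv n_1' q_2+n_2' q_1\pmod{q_1 q_2}$, then reducing modulo $q_1$ gives $(n_1-n_1')q_2\equiv 0\pmod{q_1}$, and since $\gcd(q_1,q_2)=1$ the factor $q_2$ is invertible modulo $q_1$, forcing $n_1\equiv n_1'\pmod{q_1}$; reducing modulo $q_2$ gives $n_2\equiv n_2'\pmod{q_2}$ in the same way. Hence $n$ and $n'$ represent the same class and the map is injective (and, by counting, a bijection onto $\Z/q_1 q_2\Z$, which is the Chinese Remainder Theorem). This establishes $(A_2)$, and Theorem~\ref{t:blochIrr} gives the conclusion.

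The only substantive step — and the sole place the hypotheses are used — is the distinctness check $(A_2)$; verifying $(A_1)$ is immediate once the hopping geometry is written down. I expect the main point to be recognizing that coprimality is sharp for this argument: if $g=\gcd(q_1,q_2)>1$, then the points $(q_1/g,0)$ and $(0,q_2/g)$ lie in distinct classes modulo $q_1\Z\times q_2\Z$ (for instance $q_1/g\not\equiv 0 \pmod{q_1}$) yet both produce the scalar $\exp(2\pi i/g)$, so two of the characters collide and $(A_2)$ fails. Thus coprimality is precisely the condition guaranteeing that the lowest-degree monomial has $q_1 q_2$ distinct images under the character action, which is the input Theorem~\ref{t:blochIrr} requires.
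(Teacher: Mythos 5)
Your proposal is correct and follows essentially the same route as the paper: identify the unique lowest-degree monomial $h(z)=-z_1^{-1}z_2^{-1}$ of $p_{\mathrm{EHM}}$, note $\deg h=-2<0$ for Assumption~\ref{assump1}, reduce Assumption~\ref{assump2} to injectivity of $n\mapsto n_1q_2+n_2q_1 \pmod{q_1q_2}$ on $W$ (where coprimality enters via the Chinese Remainder Theorem), and invoke Theorem~\ref{t:blochIrr}. The paper states this verification only in passing, so your explicit CRT argument and the sharpness example for $\gcd(q_1,q_2)>1$ merely fill in details the paper leaves implicit.
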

\begin{coro} \label{coro:tri}
If $A$ denotes the Laplacian on the triangular lattice, then for any periodic $V$, the Bloch variety of $A+V$ is irreducible modulo periodicity.
\end{coro}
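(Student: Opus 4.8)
The plan is to verify that the triangular-lattice Laplacian meets Assumptions $(A_1)$ and $(A_2)$ and then to invoke Theorem~\ref{t:blochIrr}. I would first fix the standard realization of the triangular lattice as $\bbZ^2$ in which each site $n$ is joined to $n\pm e_1$, $n\pm e_2$, and $n\pm(e_1-e_2)$. With this geometry the hopping coefficients satisfy $a_n=-1$ on these six neighbors (together with a possible constant on the diagonal, which contributes only a degree-$0$ term and is therefore harmless), so by \eqref{eq:pdefinition}
\[
p_A(z) = -\bigl(z_1 + z_1^{-1} + z_2 + z_2^{-1} + z_1^{-1}z_2 + z_1 z_2^{-1}\bigr).
\]

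Next I would read off the lowest-degree component. The total degrees of the six monomials are $+1,-1,+1,-1,0,0$, so the minimal total degree is $-1$, attained exactly by $z_1^{-1}$ and $z_2^{-1}$; hence
\[
h(z) = -\bigl(z_1^{-1} + z_2^{-1}\bigr),
\]
which has degree $-1 < 0$, giving Assumption $(A_1)$ at once. For Assumption $(A_2)$ I would compute the twisted Laurent polynomials using $\mu_n\odot z=(\mu_n^1 z_1,\mu_n^2 z_2)$ and $\mu_n^j = e^{2\pi i n_j/q_j}$:
\[
h(\mu_n\odot z) = -\bigl((\mu_n^1)^{-1}z_1^{-1} + (\mu_n^2)^{-1}z_2^{-1}\bigr).
\]
Two members of this family coincide precisely when their coefficients agree, i.e. when $\mu_n^1=\mu_m^1$ and $\mu_n^2=\mu_m^2$, which forces $n_1\equiv m_1 \pmod{q_1}$ and $n_2\equiv m_2\pmod{q_2}$, hence $n=m$ for $n,m$ in the fundamental domain $W$. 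Thus $\{h(\mu_n\odot z)\}_{n\in W}$ is pairwise distinct and $(A_2)$ holds for \emph{every} $q$, with no coprimality hypothesis, and Theorem~\ref{t:blochIrr} delivers irreducibility modulo periodicity.

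The computation is routine, and the only step demanding genuine care is the choice of embedding, which I view as the main obstacle. The decisive structural feature — and the reason no coprimality assumption on $q$ is needed here, in contrast to Corollary~\ref{coro:ehm} — is that the lowest-degree component separates the two variables, each coordinate entering through a pure monomial $z_j^{-1}$ with nonzero coefficient; this lets $\mu_n$ be recovered coordinatewise, so distinct classes in $W$ always produce distinct twisted polynomials. Had the geometry instead yielded a mixed lowest-degree monomial such as $z_1^{-1}z_2^{-1}$ (as under the alternative convention $n\pm(e_1+e_2)$), only the product $\mu_n^1\mu_n^2$ would be visible and $(A_2)$ could fail, e.g. for $q_1=q_2=2$. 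Confirming that the correct triangular geometry avoids this degeneracy is therefore the one point I would double-check before applying the main theorem.
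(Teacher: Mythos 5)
Your proposal is correct and follows essentially the same route as the paper: the same sheared realization of the triangular lattice with neighbors $\pm e_1,\pm e_2,\pm(e_1-e_2)$, the same symbol $p_{\rm tri}$, the same lowest-degree component $h(z)=-\left(z_1^{-1}+z_2^{-1}\right)$, and the same conclusion via Assumptions~\ref{assump1}, \ref{assump2} and Theorem~\ref{t:blochIrr}. The only difference is that you spell out the verification of \ref{assump2} (and the contrast with the coprimality needed for the extended Harper lattice) which the paper dismisses as trivial; this is a sound elaboration, not a different argument.
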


Note that irreducibility of the Bloch variety is potentially sensitive to modifications in the hopping terms (i.e., the matrix elements of $A$). To the best of our knowledge, even the results of Corollaries \ref{coro:ehm} and \ref{coro:tri} are new. For further details, including definitions of the triangular and extended Harper lattices, see Section~\ref{sec:examples}. To emphasize the distinction between the above models, we present the corresponding polynomials below, recalling that Equation~\eqref{eq:pdefinition} provides the dictionary between $A$ and $P_A$.
\begin{enumerate}[label=(\roman*)]
    \item For the discrete Laplacian on $\bbZ^d$, \begin{equation*}p_{-\Delta}(z)=-\left(z_1+\frac{1}{z_1}+z_2+\frac{1}{z_2}+\cdots+z_d+\frac{1}{z_d}\right) \end{equation*}
    \item For the extended Harper lattice
    \[p_{\rm EHM}(z)=-\left(z_1+\frac{1}{z_1}+z_2+\frac{1}{z_2}+\frac{z_1}{z_2}+\frac{z_2}{z_1}+z_1z_2+\frac{1}{z_1z_2}\right)\]
    \item For the triangular lattice, 
    \begin{equation*}
        p_{\rm tri}(z)=-\left(z_1+\frac{1}{z_1}+z_2+\frac{1}{z_2}+\frac{z_1}{z_2}+\frac{z_2}{z_1}\right).
    \end{equation*}
\end{enumerate}
In particular, in dimension $d=2$, $p_{\rm EHM}(z)$ adds to $p_{-\Delta}(z)$ next nearest neighbor terms and is symmetric with respect to the map $z_j \mapsto z^{-1}_{j}$ for $j=1,2.$ The polynomial $p_{\rm tri}(z)$ does not possess this symmetry, nonetheless the corresponding variety still falls into the scope of Theorem \ref{t:blochIrr}. The triangular lattice is depicted in Figure~\ref{fig:trilat}. Applying a simple shear transformation reduces the triangular lattice to the square lattice with additional edges, as shown in Figure~\ref{fig:trishear}, and hence places the Laplacian on the triangular lattice into the context of the paper after a suitable change of coordinates.
\begin{figure*}[h]
 \begin{minipage}{0.45\textwidth}
 \centering
\begin{tikzpicture}[yscale=.84,xscale=.84]
\filldraw[color=black, fill=black](-3,-3) circle (0.18);
\filldraw[color=black, fill=black](-1,-3) circle (0.18);
\filldraw[color=black, fill=black](1,-3) circle (0.18);
\filldraw[color=black, fill=black](3,-3) circle (0.18);
\filldraw[color=black, fill=black](-2,{sqrt(3)-3}) circle (0.18);
\filldraw[color=black, fill=black](0,{sqrt(3)-3}) circle (0.18);
\filldraw[color=black, fill=black](2,{sqrt(3)-3}) circle (0.18);
\filldraw[color=black, fill=black](-3,{2*sqrt(3)-3}) circle (0.18);
\filldraw[color=black, fill=black](-1,{2*sqrt(3)-3}) circle (0.18);
\filldraw[color=black, fill=black](1,{2*sqrt(3)-3}) circle (0.18);
\filldraw[color=black, fill=black](3,{2*sqrt(3)-3}) circle (0.18);
\filldraw[color=black, fill=black](-2,{3*sqrt(3)-3}) circle (0.18);
\filldraw[color=black, fill=black](0,{3*sqrt(3)-3}) circle (0.18);
\filldraw[color=black, fill=black](2,{3*sqrt(3)-3}) circle (0.18);
\draw [-,line width = .05cm] (-3,-3) -- (3,-3);
\draw [-,line width = .05cm] (-3,{sqrt(3)-3}) -- (3,{sqrt(3)-3});
\draw [-,line width = .05cm] (-3,{2*sqrt(3)-3}) -- (3,{2*sqrt(3)-3});
\draw [-,line width = .05cm] (-3,{3*sqrt(3)-3}) -- (3,{3*sqrt(3)-3});
\draw [-,line width=.05cm] (-3,{2*sqrt(3)-3}) -- (-2,{3*sqrt(3)-3});
\draw [-,line width=.05cm] (-3,-3) -- (0,{3*sqrt(3)-3});
\draw [-,line width=.05cm] (-1,-3) -- (2,{3*sqrt(3)-3});
\draw [-,line width=.05cm] (1,-3) -- (3,{2*sqrt(3)-3});
\draw [-,line width=.05cm] (-3,{2*sqrt(3)-3}) -- (-1,-3);
\draw [-,line width=.05cm] (-2,{3*sqrt(3)-3}) -- (1,-3);
\draw [-,line width=.05cm] (0,{3*sqrt(3)-3}) -- (3,-3);
\draw [-,line width=.05cm] (2,{3*sqrt(3)-3}) -- (3,{2*sqrt(3)-3});
\draw [->,line width=.05cm,color=blue] (-2,{sqrt(3)-3}) -- (-1,{2*sqrt(3)-3});
\draw [->,line width=.06cm,color=blue] (-2,{sqrt(3)-3}) -- (-1,{2*sqrt(3)-3});
\draw [->,line width=.06cm,color=blue] (-2,{sqrt(3)-3}) -- (0,{sqrt(3)-3});
\node [above] at (-2,{sqrt(3)+.3-3}) {\cold{$\bm{b}_2$}};
\node [below] at (-1.3,{sqrt(3)-.1-3}) {\cold{$\bm{b}_1$}};
\end{tikzpicture}
\caption{A portion of the triangular lattice}\label{fig:trilat}
\end{minipage}
\hfill
 \begin{minipage}{0.45\textwidth}
 \centering
\begin{tikzpicture}[yscale=.75,xscale=.75]
\filldraw[color=black, fill=black](-3,-3) circle (0.18);
\filldraw[color=black, fill=black](-1,-3) circle (0.18);
\filldraw[color=black, fill=black](1,-3) circle (0.18);
\filldraw[color=black, fill=black](3,-3) circle (0.18);
\filldraw[color=black, fill=black](-3,-1) circle (0.18);
\filldraw[color=black, fill=black](-1,-1) circle (0.18);
\filldraw[color=black, fill=black](1,-1) circle (0.18);
\filldraw[color=black, fill=black](3,-1) circle (0.18);
\filldraw[color=black, fill=black](-3,1) circle (0.18);
\filldraw[color=black, fill=black](-1,1) circle (0.18);
\filldraw[color=black, fill=black](1,1) circle (0.18);
\filldraw[color=black, fill=black](3,1) circle (0.18);
\filldraw[color=black, fill=black](-3,3) circle (0.18);
\filldraw[color=black, fill=black](-1,3) circle (0.18);
\filldraw[color=black, fill=black](1,3) circle (0.18);
\filldraw[color=black, fill=black](3,3) circle (0.18);
\draw [-,line width = .06cm] (-3,-3) -- (3,-3);
\draw [-,line width = .06cm] (-3,-3) -- (-3,3);
\draw [-,line width=.06cm] (-3,-1) -- (3,-1);
\draw [-,line width = .06cm] (-1,-3) -- (-1,3);
\draw [-,line width=.06cm] (-3,1) -- (3,1);
\draw [-,line width = .06cm] (1,-3) -- (1,3);
\draw [-,line width=.06cm] (-3,3) -- (3,3);
\draw [-,line width = .06cm] (3,-3) -- (3,3);
\draw [-,line width = .06cm] (-3,3) -- (3,-3);
\draw [-,line width = .06cm] (-1,-3) -- (-3,-1);
\draw [-,line width = .06cm] (1,-3) -- (-3,1);
\draw [-,line width = .06cm] (-1,3) -- (3,-1);
\draw [-,line width = .06cm] (3,1) -- (1,3);
\end{tikzpicture}
\caption{The triangular lattice after shearing.}\label{fig:trishear}
\end{minipage}
\end{figure*}

\subsection{Definitions and Context}\label{sec:definitions}
Let us now give relevant definitions and context. Given $q_i\in \bbN$, $i=1,2,\ldots,d$,
let $\Gamma = \Gamma_q :=q_1\Z\oplus q_2 \Z\oplus\cdots\oplus q_d\Z$.
We say that a function $V: \Z^d\to \C$ is  $q$-periodic ($\Gamma$-periodic, or just periodic)  if $V_{n+\gamma} = V_n$ for all $n \in \bbZ^d$ and all $\gamma\in \Gamma$. 
\begin{definition} Let $\bbC^\star = \bbC\setminus \{0\}$. For $z = (z_1,\ldots,z_d) \in (\bbC^\star)^d$ and $q = (q_1,\ldots,q_d) \in \bbN^d$, the space $\scrH(z,q)$ consists of those $\psi :\bbZ^d \to \bbC$ for which
\begin{equation}
\psi_{n+j\odot q} = z^j \psi_n, \forall n,j\in \bbZ^d,
\end{equation}
where we write $j\odot q=(j_1q_1,\ldots,j_d q_d)$ and use the multi-index notation $z^j = z_1^{j_1} \cdots z_d^{j_d}$. Naturally, $\scrH(z,q)$ is a Hilbert space of finite dimension $Q:=q_1\cdots q_d$.

If $V:\bbZ^d \to \bbC$ is $q$-periodic, the corresponding Bloch variety is given by 
\begin{equation} \label{eq:blochvarDef}
B=B(H) = \{(k,\lambda) \in \bbC^{d+1} : H\psi = \lambda \psi \text{ enjoys a nonzero solution in } \scrH(e^{2\pi i k},q)\},
\end{equation}
where we write $e^{2\pi i k} = (e^{2\pi i k_1},\ldots,e^{2\pi i k_d}) \in (\bbC^\star)^d$.
We employ here a standard abuse of notation in which $H$ represents both the self-adjoint operator in $\ell^2(\bbZ^d)$ and the difference operator acting in, say, $\ell^\infty(\bbZ^d)$.
\end{definition} 

\begin{definition} We will say the Bloch variety $B(H)$ is \emph{irreducible modulo periodicity} if for every two irreducible components $\Omega_1$ and $\Omega_2$ of $B(H)$, there exists $m \in \mathbb{Z}^d$ such that $\Omega_1 = (m,0)+\Omega_2$. \end{definition}

\begin{definition}
Given  $\lambda\in \C$, the  Fermi surface (variety) $F_{\lambda}(H)$ is defined as the level set of the  Bloch variety:
\begin{equation*}
F_{\lambda}(H)=\{k\in\C^d: (k,\lambda)\in B(H)\}.
\end{equation*}
\end{definition}

We should mention that 	reducible Fermi and Bloch  varieties  are known to occur for  periodic graph operators,  e.g.,  ~\cite{shi2, fls}. 
 One challenging problem in the study of periodic operators is to prove the (ir)reducibility of the Bloch and  Fermi  varieties ~\cite{GKTBook, ktcmh90, bktcm91, bat1, batcmh92, ls, shi2, fls, GKToverview,shva}.
For instance, irreducibility of the Bloch variety implies that in case $B(H)\cap U\neq \emptyset$ for some open set $U\subset \bbC^{d+1}$, the knowledge of $B(H)\cap U$ allows one to recover $B(H)$.
Besides its own importance in algebraic geometry, the
(ir)reducibility of these varieties is crucial  in the study of spectral properties of periodic elliptic operators. 
In particular, this has implications for the structure of spectral  band edges~\cite{LiuPreprint:Irreducibility,fs22}, the isospectrality~\cite{LiuPreprint:fermi} and the  existence of embedded eigenvalues for operators perturbed by a local defect~\cite{kv06cmp, kvcpde20, shi1, IM14, AIM16,dks,liu2021topics}. 
Based on existing evidence, Kuchment conjectures that the Bloch variety of any periodic second-order elliptic operator is irreducible \cite[Conjecture 5.17]{Kuchment2016BAMS}.

 There have been many works that address irreducibility of the Bloch and Fermi varieties (see, e.g.,~\cite{bat1, batcmh92, battig1988toroidal, bktcm91, GKTBook, ktcmh90, LiuPreprint:Irreducibility}).
In two dimensions, B{\"a}ttig  \cite{battig1988toroidal} showed that the Bloch variety  $B(-\Delta+V)$  is irreducible (modulo periodicity). In \cite{GKTBook},  Gieseker, Kn\"orrer and Trubowitz proved  that  $F_{\lambda}(-\Delta+V)/\Z^2$ is irreducible except for finitely many values of  $\lambda$. When $d=3$, irreducibility of $F_{\lambda}(-\Delta+V)/\Z^d$ for any $\lambda\in \C$ was proved by B{\"a}ttig  \cite{batcmh92}.  In a recent paper of the second author~\cite{LiuPreprint:Irreducibility} it is showed that when $d\geq3$ $F_{\lambda}(-\Delta+V)/\Z^d$ is irreducible for any $\lambda\in \C$ and when $d=2$ $F_{\lambda}(-\Delta+V)/\Z^2$  is irreducible for all $\lambda\in \mathbb{C}\setminus [V]$, where $[V]$ is the average of $V$ over one periodicity cell. It follows from these works that when $d\geq2$ the Bloch variety  $B(-\Delta+V)$  is irreducible (modulo periodicity). For continuous periodic Schr\"odinger operators, when $d=2$ Kn\"orrer and Trubowitz \cite{ktcmh90} proved  that  the Bloch variety is irreducible  (modulo periodicity) and for $d=3$,  B{\"a}ttig,  Kn\"orrer and Trubowitz proved that the Fermi variety  at any level is irreducible (modulo periodicity) for separable periodic  potentials \cite{bktcm91}.
In \cite{ls}, irreducibility of the Fermi variety for all but finitely many energies is proved for a suitable class of planar periodic graphs.

In  \cite{GKTBook, ktcmh90, bktcm91, bat1, batcmh92, battig1988toroidal}, algebraic geometry techniques are employed to construct the toroidal and directional compactifications of Fermi and  Bloch varieties and understand asymptotics of their defining (Laurent) polynomials. The perspective employed by us  in the current manuscript is inspired by \cite{LiuPreprint:Irreducibility}.
In general terms, the goal is to explicitly calculate asymptotics of the  (Laurent) polynomials at $z\in\{z: z_j=0 \text{ or } z_j=\infty, j=1,2, \cdots,k\}$ and
	show that these asymptotics  contain enough information about the original variety. Concretely, the proof is based on  changing variables,  studying of the lowest degree components of a family of (Laurent) polynomials in several variables and  degree arguments. With regards to the Bloch variety, we expand the approach of \cite{LiuPreprint:Irreducibility} in different directions. As a consequence, for the main result of Theorem~\ref{mainthm} below, the underlying lattice may be of very general nature and contain somewhat arbitrary finite-range connections (see ~\ref{assump1} and ~\ref{assump2} below for a more precise statement of the assumptions). In particular, we obtain irreducibility for the Bloch variety corresponding to periodic Schr\"{o}dinger operators on the triangular lattice and the extended Harper lattice; see Section~\ref{sec:examples} for a precise description of these examples.
While our approach is inspired by \cite{LiuPreprint:Irreducibility}, we do not follow the same path. By working directly with the lowest degree components, we can eschew a discussion of asymptotic statements about the varieties themselves.

The structure of the paper is as follows. We precisely formulate Theorem~\ref{mainthm}, our main   result, in Section~\ref{sec:mainresult}. Section~\ref{sec:technical lemmas} contains preparatory technical results that are then employed in Section~\ref{sec:mainproof} to prove Theorem~\ref{mainthm}. We elucidate the connection between this result and periodic operators in Section~\ref{sec:floquet}, which also contains some relevant background on periodic long-range Schr\"odinger operators.  We conclude in Section~\ref{sec:examples} with the proof of  Theorem~\ref{t:blochIrr} and some relevant examples and applications.

\subsection*{Acknowledgements}
We are grateful to the anonymous reviewer for carefully reading the manuscript and for helpful suggestions that improved the paper.

\section{Main Result} \label{sec:mainresult}
To state the main result, we begin by recalling some crucial terminology.
\begin{definition}
	Suppose $f$ is a Laurent monomial in $m$ variables, that is, $f(z) =cz^\alpha= cz_1^{\alpha_1}z_2^{\alpha_2}\cdots z_m^{\alpha_m}$ with $\alpha_i\in\Z$ for $i=1, \ldots, m$ and $c \neq 0$. The degree of $f$ is defined as
	$\deg (f)=\alpha_1+\alpha_2+\cdots +\alpha_m$. 
	Abusing notation slightly, we also denote $\deg(\alpha) = \alpha_1 + \alpha_2 + \cdots +\alpha_m$ for the multi-index $\alpha = (\alpha_1,\ldots,\alpha_m) \in \bbZ^m$.
\end{definition}

\begin{definition}\label{lowestdfn}
	Given a Laurent polynomial
	\[p(z)= \sum c_\alpha z^{\alpha},\]
	let $L_- = \min\{\deg(\alpha) : c_\alpha \neq 0\}$. Then,  the \emph{lowest degree component} of $p$ is defined to be the Laurent polynomial 
	\[h(z)=\sum_{\deg \alpha =L_{-}}c_\alpha z^{\alpha}.\]
\end{definition}
One of the crucial properties of this notion is the following: denoting the lowest-degree component of $p$ by $\underline{p}$, one has $\underline{(fg)} = \underline{f}\cdot\underline{g}$, which enables one to relate factorizations of a polynomial to factorizations of its lowest-degree component. Obviously, some care is needed to deduce nontrivial consequences from this observation in the context of our main result.

Let us write $\bbC[z_1,\ldots,z_m]=:\bbC[z]$ for the set of polynomials in $z_1,\ldots,z_m$. Similarly, we write
$\bbC[z_1,z_1^{-1},\ldots,z_m,z_m^{-1}]=:\bbC[z,z^{-1}]$ for the set of Laurent polynomials\footnote{This involves a minor, albeit common abuse of notation, since one has the relation $z_jz_j^{-1}=1$ in $\bbC[z,z^{-1}]$.} in $z_1,\ldots,z_m$.

\begin{definition}
	
	Recall that a polynomial $\mathcal{P} \in \bbC[z]$ is called reducible if there exist nonconstant polynomials $f,g \in \bbC[z]$ such that $\mathcal{P}=fg$ and \emph{irreducible} otherwise. Similarly, we say  that  a Laurent polynomial $\mathcal{P} \in \bbC[z,z^{-1}]$ is irreducible if it  can not be factorized non-trivially, that is, there are no non-monomial Laurent polynomials $f,g$ such that $\mathcal{P}= fg$. 
	
	Notice that nonconstant monomials are units in the algebra of Laurent polynomials, which accounts for a small subtlety. That is, one must be somewhat careful here with zeros at $z=0$ and $z = \infty$. The polynomial $z^2$ is reducible in $\bbC[z]$ but is a unit in $\bbC[z,z^{-1}]$. In practice, this should cause no confusion, and we will write that $\mathcal{P}$ is irreducible in $\bbC[z]$ (respectively in $\bbC[z,z^{-1}]$) if we wish to emphasize the sense in which irreducibility is meant in a specific context.
\end{definition}

\begin{remark}
	If $P$ is an irreducible Laurent polynomial in $m$ variables, then the corresponding variety $\{z\in (\C^{\star})^m: \mathcal{P}(z)=0\}$ is irreducible as an analytic set.\footnote{The converse is clearly false, which may be considered by considering the variety associated with $f(z)^2$, where $f(z)$ is irreducible. This issue can be elegantly resolved using the language of schemes \cite{Hartshorne1977GTM}.}
Thus, the overall strategy of our work is to show that a suitable Laurent polynomial that describes the Bloch variety is irreducible.
Concretely, we may consider the set $\mathcal{B}(H)$ which consists of those $(z,\lambda)\in (\bbC^\star)^d \times \bbC$ such that $H\psi = \lambda \psi$ enjoys a nontrivial solution $\psi \in \scrH(z,q)$.
By Floquet theory, one may determine a suitable Laurent polynomial ${\mathcal{P}}(z,\lambda)$ such that $\mathcal{B}(H)$ is precisely the zero set of ${\mathcal{P}}$ (see Section~\ref{sec:floquet}).
Thus, since $(k,\lambda) \in B(H)$ if and only if $(e^{2\pi i k},\lambda) \in \mathcal{B}(H)$, to show that $B(H)$ is irreducible modulo periodicity, it suffices to show that the corresponding Laurent polynomial is irreducible.
	
\end{remark}

Let us begin by collecting some notation that we will use throughout the paper.
Given $q=(q_1,\ldots,q_d) \in\bbN^d$, we define the lattice $\Gamma$ by
\begin{equation}
\Gamma = \bigoplus_{j=1}^d q_j\bbZ= \{ n \in \bbZ^d : q_j | n_j \ \forall\,\, 1\le j \le d\}
\end{equation}
and
 the fundamental cell, $W$,  by
	\begin{equation} \label{eq:fundcelldef}
W=\{n=(n_1,n_2,\ldots,n_d)\in\bbZ^d: 0\leq n_j\leq q_{j}-1, j=1,2,\ldots, d\} = \bbZ^d \cap \prod_{j=1}^d [0,q_j).
\end{equation}
Given $n\in W$ and $j\in\{1,\ldots,d\}$, let
\begin{equation}\label{action}
 \mu^{j}_{n}=e^{2\pi i\frac{n_j}{q_j}}. 
\end{equation}
and denote by $\mu_n$ the vector $(\mu^{1}_{n},\ldots, \mu^{d}_{n})$. We also let
\begin{equation} \label{eq:characterActionDef}
\mu_n \odot\left(z_{1}, z_{2},\ldots, z_d\right)=\left(\mu^{1}_{n}z_{1}, \mu^{2}_{n}z_{2},\ldots,\mu^{d}_{n}z_d\right).
\end{equation}
Let $p$ be a Laurent polynomial and define
\begin{equation}
p_{n}(z)=p(\mu_n\odot z), \quad n \in W, \ z \in (\bbC^{\star})^d.
\end{equation}

We shall work with Laurent polynomials in $m=d+1$ variables $z_1,...,z_d,\lambda$. Abusing notation somewhat, we write $\bbC[z,\lambda]$ (respectively $\bbC[z,\lambda,z^{-1},\lambda^{-1}]$) for the set of polynomials (respectively the set of Laurent polynomials) in $z$ and $\lambda$. The Laurent polynomials of interest in the present work are those of the form 
\begin{equation}\label{mathcalP}
\widetilde{\mathcal{P}}(z,\lambda)=\prod_{n\in W} (p_{n}(z)-\lambda)+\sum_{X\in \mathcal{S}}C_{X}\prod_ { n\in X}(p_n(z)-\lambda),\end{equation}
where the summation runs over $X$ in an arbitrary collection $\mathcal{S}$ of proper subsets of $W$ and $C_{X}\in \bbC$.  In fact, $\widetilde{\mathcal{P}}(z,\lambda)$ is a Laurent polynomial in the variable $z$ and a polynomial in $\lambda$.
Collecting terms, we see that 
\begin{equation} \label{eq:PmonicInlambda}
\widetilde{\mathcal{P}}(z,\lambda) = (-1)^Q \lambda^Q + \sum_{k=0}^{Q-1}b_k(z)\lambda^k,\end{equation}
where $b_k \in \bbC[z,z^{-1}]$ and $Q=q_1\cdots q_d$.

Note that we do not exclude the case $\emptyset \in\mathcal{S}$, our convention being that $\prod_ { n\in \emptyset}(p_n(z) - \lambda)=1$.
These are exactly the types of polynomials that one produces by expanding the determinant of the Floquet operator associated with a suitable periodic operator, hence their interest in the current work.

 For each $X$, the constant $C_{X}$ is assumed to be independent of $\lambda$ and $z$. Assume further that $\widetilde{\mathcal{P}}(z,\lambda)$ is invariant under action of each $\mu_n$, i.e., 
\begin{equation} \label{eq:mathcalPCharInv}
\widetilde{\mathcal{P}}(z,\lambda)=\widetilde{\mathcal{P}}(\mu_n\odot z,\lambda) \text{ for all } n\in W.
\end{equation}

\begin{remark} \label{rem:polyToSchrod}
The assumptions \eqref{mathcalP} and \eqref{eq:mathcalPCharInv} include the central example 
where
\begin{equation*}
 \widetilde{\mathcal{P}}(z,\lambda)=\mathrm{det}\left(D+B-\lambda I \right)  
\end{equation*} and the matrices $D=D(z)$ and $B$ are defined by
\begin{equation} \label{eq:rem21Adef}
	D(n,n')=p_n(z) \delta_{n,n'}\
\end{equation}
\begin{equation}
 B(n,n')=\widehat V\left(\frac{n_1-n'_1}{q_1},\ldots,\frac{n_d-n'_{d}}{q_d}\right),\,\,\, n,n'\in W.
\end{equation}
Here $\widehat V$ denotes the discrete Fourier transform of $V$, defined as in \eqref{eq:discretefourier}.
For further discussion, see Section~\ref{sec:floquet}, especially Proposition~\ref{prop:floquetTransf}.

Let us note the key properties are that $D$ is a diagonal matrix and the entries of $B$ are independent of $z$. Consequently, neither self-adjointness of $A$ or real-valuedness of $V$ is a crucial ingredient.
\end{remark}
Since $\widetilde{\mathcal{P}}(z,\lambda)$ is invariant under the action of each $\mu_n$, it is elementary to check (cf.\ Lemma~\ref{liftlemma}) that there exists $\mathcal{P}(z,\lambda)$ such that
\begin{equation}\label{Pdef}\widetilde{\mathcal{P}}(z,\lambda)=\mathcal{P}(z_1^{q_1},z_2^{q_2},\ldots,z_d^{q_d},\lambda).
\end{equation}

Our goal is to show that $\mathcal{P}(z,\lambda)$ is irreducible as a Laurent polynomial under the assumptions below. 
\begin{enumerate}[label=(\subscript{A}{{\arabic*}})]
		\item\label{assump1} $\deg(h) < 0$,  where $h$ denotes the lowest degree  component of $p$, (see Definition~\ref{lowestdfn}). 
		\item\label{assump2}
		 Letting $h_n(z)=h(\mu_n \odot z)$ , the polynomials $\{h_n(z)\}_{n\in W}$ are pairwise distinct.
	\end{enumerate}
	
The reader may readily check that $p_{n+m}(z) = p_n(\mu_m \odot z)$ (with addition of indices computed mod $\Gamma$). Thus, to check Assumption~\ref{assump2} in practice, it suffices to show that $h_0 \neq h_n$ for every $n \in W\setminus\{0\}$.

\begin{theorem}\label{mainthm}
Let $p \in \bbC[z,z^{-1}]$, $q \in \bbN^d$, $\mathcal{S}$ a collection of proper subsets of $W$, and complex numbers $\{C_X\}_{X \in \mathcal{S}}$ be given. Assume that $\widetilde{\mathcal{P}}$ is a polynomial of the form \eqref{mathcalP} obeying \eqref{eq:mathcalPCharInv}, and let $\mathcal{P}$ be the polynomial given by \eqref{Pdef}.
Under Assumptions~{\rm\ref{assump1}} and {\rm\ref{assump2}}, we conclude that $\mathcal{P}$ is irreducible as a Laurent polynomial. 
\end{theorem}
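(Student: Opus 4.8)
The plan is to argue by contradiction: suppose $\mathcal{P}$ factors nontrivially in $\bbC[z,z^{-1},\lambda,\lambda^{-1}]$, and derive a contradiction with Assumptions~\ref{assump1} and \ref{assump2} by passing to the lowest-degree component. First I would set up the correspondence between factorizations of $\mathcal{P}$ and those of $\widetilde{\mathcal{P}}$ via the substitution \eqref{Pdef}; since $z_j \mapsto z_j^{q_j}$ is a ring homomorphism that is compatible with the $\mu_n$-action, a nontrivial factorization $\mathcal{P} = FG$ lifts to a nontrivial factorization $\widetilde{\mathcal{P}} = \widetilde{F}\,\widetilde{G}$, where $\widetilde{F}(z,\lambda) = F(z_1^{q_1},\ldots,z_d^{q_d},\lambda)$ and similarly for $\widetilde{G}$. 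So it suffices to work with $\widetilde{\mathcal{P}}$ and show that it admits no nontrivial factorization compatible with \eqref{Pdef}, i.e., into factors that are themselves invariant (as a pair, up to relabeling) under the $\mu_n$-action.

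The heart of the argument is to understand $\widetilde{\mathcal{P}}$ as a polynomial in $\lambda$. From \eqref{eq:PmonicInlambda} it is monic of degree $Q$ in $\lambda$ (up to the sign $(-1)^Q$), and from \eqref{mathcalP} its leading behavior is governed by $\prod_{n \in W}(p_n(z) - \lambda)$: the roots in $\lambda$, viewed as functions of $z$, are the $Q$ values $\{p_n(z)\}_{n \in W}$ perturbed by the lower-order subset-sums. I would next examine the \emph{lowest-degree component in $z$} of the coefficients. Here is where Assumption~\ref{assump1} enters decisively: since $\deg(h) < 0$, as we send $z$ toward the appropriate boundary stratum (driving $z^{\alpha} \to \infty$ along the direction picking out the most-negative-degree terms), the dominant part of each $p_n(z) = p(\mu_n \odot z)$ is exactly $h_n(z) = h(\mu_n \odot z)$, and because $\deg(h)<0$ these dominate over the constant shift $-\lambda$ and over the subset-sum corrections in \eqref{mathcalP}. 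Consequently the lowest-degree component of $\widetilde{\mathcal{P}}$ (in $z$, treating $\lambda$ as a bounded parameter) is precisely $\prod_{n \in W} h_n(z)$, possibly times a power of $\lambda$ accounting for how many factors the $-\lambda$ term wins. Using the multiplicativity $\underline{(fg)} = \underline{f}\cdot\underline{g}$ recorded after Definition~\ref{lowestdfn}, any factorization $\widetilde{\mathcal{P}} = \widetilde{F}\widetilde{G}$ forces a corresponding factorization of this lowest-degree component $\prod_{n \in W} h_n$.

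The final step is combinatorial. By Assumption~\ref{assump2}, the Laurent polynomials $\{h_n\}_{n \in W}$ are pairwise distinct; I would argue they are in fact pairwise non-associate (distinct up to monomial units) by using that they are obtained from a single $h$ by the character action $\mu_n$, so each factorization of $\prod_{n} h_n$ must partition the index set $W$ into two nonempty pieces $W = X \sqcup (W \setminus X)$, with $\underline{\widetilde{F}} = \prod_{n \in X} h_n$ and $\underline{\widetilde{G}} = \prod_{n \in W \setminus X} h_n$ (each factor being a genuine sub-product). But the original $\widetilde{\mathcal{P}}$ is invariant under every $\mu_m$ by \eqref{eq:mathcalPCharInv}, and the $\mu_m$-action permutes the $h_n$ transitively (since $p_{n+m}(z) = p_n(\mu_m \odot z)$ and $W$ is a single orbit of the translation action mod $\Gamma$). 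A genuine factor $\prod_{n \in X} h_n$ with $X$ a proper nonempty subset cannot be preserved, even up to swapping with the complementary factor, by a transitive group action unless $X$ is a union of blocks respected by the whole group — and transitivity rules out any invariant proper nonempty subset. This contradicts the requirement that the lifted factorization descend through \eqref{Pdef}, i.e., be compatible with the full $\mu$-invariance. Hence no nontrivial factorization exists and $\mathcal{P}$ is irreducible.

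\textbf{Main obstacle.} I expect the genuinely delicate point to be the second step: rigorously isolating the lowest-degree component of $\widetilde{\mathcal{P}}$ in the multivariable Laurent setting and confirming that it equals $\prod_{n \in W} h_n$ (up to a power of $\lambda$), rather than something contaminated by the correction terms $\sum_{X \in \mathcal{S}} C_X \prod_{n \in X}(p_n - \lambda)$. One must check that these corrections, which involve products over \emph{proper} subsets, genuinely have strictly higher $z$-degree than the full product after the change of variables, and must treat $\lambda$ carefully — it behaves like a degree-zero quantity in $z$, so its interaction with the grading needs to be tracked so that the multiplicativity of lowest-degree components can be applied cleanly. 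The transitivity/combinatorial step, by contrast, should be routine once the correct algebraic statement about the $\mu$-action on $\{h_n\}$ is in hand.
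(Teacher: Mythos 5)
Your overall scaffolding (lift a factorization of $\mathcal{P}$ to a $\mu$-invariant factorization of $\widetilde{\mathcal{P}}$, pass to lowest-degree components, exploit $p_{n+m}(z)=p_n(\mu_m\odot z)$ and transitivity) matches the paper, and you correctly flagged the interaction of $\lambda$ with the grading as the delicate point. But your proposed resolution --- treating $\lambda$ as a degree-zero ``bounded parameter'' --- is exactly where the argument breaks. With $\lambda$ of degree zero and $\deg(h)=-\gamma<0$, the term $h_n$ always beats $-\lambda$, so the lowest-degree component of $\widetilde{\mathcal{P}}$ is $\prod_{n\in W}h_n(z)$ with \emph{no} $\lambda$-dependence at all (your hedge ``possibly times a power of $\lambda$'' never materializes). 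This loses the theorem in two ways. First, the $h_n$, though pairwise distinct by Assumption~\ref{assump2}, need not be pairwise non-associate, and your claim that the character action forces non-associateness is false: in the extended Harper example (one of the paper's target applications) $h(z)=-1/(z_1z_2)$, so every $h_n$ is a \emph{constant} multiple of the same monomial, $\prod_n h_n$ is a unit in $\bbC[z,z^{-1}]$, and a factorization of a unit carries no partition structure whatsoever. Second, even when your partition claim does hold, concluding $X=\emptyset$ or $X=W$ yields no contradiction: a nonconstant factor can perfectly well have a constant lowest-degree component (e.g.\ $1+z_1$), so $\underline{\widetilde F}$ being the empty sub-product does not make $\widetilde F$ trivial.

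The paper's proof fixes both problems with one move you are missing: substitute $\tlambda=\lambda^{-1}$ and then replace $\tlambda$ by $\tlambda^{\gamma}$ with $\gamma=-\deg(h)>0$, so that $-\lambda=-\tlambda^{-\gamma}$ has joint degree exactly $-\gamma$, the same as $h_n$. Then the lowest-degree component (after clearing denominators) is
\begin{equation*}
\left(\tlambda^{\gamma} z_1^{\gamma_1}\cdots z_d^{\gamma_d}\right)^Q\prod_{n\in W}\left(h_n(z)-\tlambda^{-\gamma}\right),
\end{equation*}
which \emph{retains} the spectral variable. This is precisely what makes Assumption~\ref{assump2} usable: the shifted factors $r_n=\tlambda z^{\gamma'}h_n-z^{\gamma'}$ of Lemma~\ref{rnirred} are irreducible (being linear in $\tlambda$) and pairwise coprime \emph{because} the $h_n$ are distinct --- in the Harper example $r_n=c_n\tlambda-z_1z_2$ with distinct constants $c_n$, pairwise coprime even though the $h_n$ are associates. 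The $\mu$-invariance of the lifted factors then forces any factor containing one $r_n$ to contain all of them, i.e.\ the descended polynomial $a(z,\tlambda)$ is irreducible (Lemma~\ref{airred}), and the final contradiction is a $\tlambda$-degree count: the factor whose lowest-degree component contains $a$ already has $\tlambda$-degree at least $Q$, which is the full $\tlambda$-degree of $\mathcal{P}$, while Lemma~\ref{Lemma:irredmeets} forces every factor to depend on $\lambda$. Without the $\tlambda^{\gamma}$ rescaling, neither the coprimality nor this degree count is available, so the gap in your proposal is not a technicality but the central idea of the proof.
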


As mentioned in Remark~\ref{rem:polyToSchrod}, the connection to Schr\"odinger operators and Theorem \ref{t:blochIrr} will be established in Section~\ref{sec:floquet}.

\begin{remark}\label{rem:notation} Let us collect some notation from the previous paragraphs that will be repeatedly used throughout the proofs.
\begin{enumerate}
    \item $\bbC[z]$ (resp.\ $\bbC[z,z^{-1}]$) denotes the set of polynomials (resp.\ Laurent polynomials) in $z_1,\ldots,z_d$.\smallskip
    
\item $p \in \bbC[z,z^{-1}]$.\smallskip

\item $h(z)$ is the lowest degree component of $p(z)$.\smallskip
\item $\Gamma = q_1\bbZ \oplus \cdots \oplus q_d \bbZ^d$, $W=\bbZ^d \cap \prod_{j=1}^d [0,q_j)$, $\mathcal{S} \subset 2^{W}\setminus\{W\}$ is arbitrary.\smallskip
\item $\mu^{j}_{n}=e^{2\pi in_j/q_j}$, $n \in \bbZ^d$, $j=1,\cdots,d$.\smallskip

\item For $n \in W$, $\mu_n=(\mu^{1}_{n},\ldots,\mu^{d}_{n})$.
\smallskip
\item $p_n(z)=p(\mu_n \odot z).$\smallskip
\item $\widetilde{\mathcal{P}}(z,\lambda)$ is given by \begin{equation*}\widetilde{\mathcal{P}}(z,\lambda)=\prod_{n\in W} (p_{n}(z)-\lambda)+\sum_{X\in \mathcal{S}}C_{X}\prod_ { n\in X}(p_n(z)-\lambda),\end{equation*}\smallskip
\item $Q=q_1\cdots q_d$
\smallskip

\item $z^\alpha = z_1^{\alpha_1} \cdots z_d^{\alpha_d}$ for $z \in (\bbC^\star)^d$, $\alpha \in \bbZ^d$.\smallskip

    \item $\mathcal{P}(z,\lambda)$ is defined by 
 \[\widetilde{\mathcal{P}}(z,\lambda)=\mathcal{P}(z_1^{q_1},z_2^{q_2},\ldots,z_d^{q_d},\lambda).\]  
    \item $a\odot b = (a_1b_1,\ldots,a_db_d)$ for ordered $d$-tuples $a=(a_1,\ldots,a_d)$ and $b = (b_1,\ldots,b_d)$.
\end{enumerate}
\end{remark}

\section{Technical Lemmas}
\label{sec:technical lemmas}
\begin{lemma}\label{liftlemma} Suppose $\widetilde{g}$ is a Laurent polynomial in $z$ and $\lambda$. With notation as in Remark~\ref{rem:notation}, one has that $\widetilde{g}(z,\lambda) \equiv \widetilde{g}(\mu_n\odot z,\lambda)$ for every $n \in W$ if and only if there is a Laurent polynomial $g(w,\lambda)$ such that
\begin{equation} \label{eq:fromPtocalP}
\widetilde{g}(z,\lambda) 
\equiv g(z_1^{q_1},\ldots,z_d^{q_d},\lambda).
\end{equation}
\end{lemma}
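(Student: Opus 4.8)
The plan is to prove both implications by expanding $\tg$ into Laurent monomials and observing that the character action $z \mapsto \mu_n \odot z$ merely rescales each monomial by a root of unity. Write $\tg(z,\lambda) = \sum_{(\alpha,k)} c_{\alpha,k}\, z^\alpha \lambda^k$, a finite sum with $\alpha \in \bbZ^d$ and $k \in \bbZ$. The reverse implication is immediate: if $\tg(z,\lambda) = g(z_1^{q_1},\ldots,z_d^{q_d},\lambda)$ then, since $(\mu_n^j)^{q_j} = e^{2\pi i n_j} = 1$ for every $n \in \bbZ^d$, substituting $\mu_n \odot z$ for $z$ leaves each argument $z_j^{q_j}$ unchanged and hence fixes $\tg$.

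For the forward implication, I would compute
\[ \tg(\mu_n \odot z, \lambda) = \sum_{(\alpha,k)} c_{\alpha,k}\, \mu_n^\alpha\, z^\alpha \lambda^k, \qquad \mu_n^\alpha := \prod_{j=1}^d (\mu_n^j)^{\alpha_j} = \exp\!\left(2\pi i \sum_{j=1}^d \frac{n_j \alpha_j}{q_j}\right). \]
Since the monomials $z^\alpha \lambda^k$ are linearly independent, the hypothesis $\tg(z,\lambda) \equiv \tg(\mu_n \odot z,\lambda)$ for all $n \in W$ is equivalent to the coefficientwise condition $c_{\alpha,k}(\mu_n^\alpha - 1) = 0$ for every index $(\alpha,k)$ and every $n \in W$. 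Thus each monomial actually present in $\tg$ (i.e.\ with $c_{\alpha,k} \neq 0$) must satisfy $\mu_n^\alpha = 1$ for all $n \in W$.

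The crux is therefore the arithmetic equivalence that $\mu_n^\alpha = 1$ for all $n \in W$ if and only if $q_j \mid \alpha_j$ for every $j$. The ``if'' direction is clear, as then $\sum_j n_j \alpha_j/q_j \in \bbZ$. For ``only if'', I would test the hypothesis against the standard basis vectors: whenever $q_j \geq 2$ we have $e_j \in W$, and $\mu_{e_j}^\alpha = e^{2\pi i \alpha_j/q_j} = 1$ forces $q_j \mid \alpha_j$; when $q_j = 1$ the divisibility holds vacuously. Granting this, every exponent appearing in $\tg$ has the form $\alpha = (q_1\beta_1,\ldots,q_d\beta_d)$ with $\beta \in \bbZ^d$, so defining $g(w,\lambda) := \sum_{(\beta,k)} c_{q\odot\beta,\,k}\, w^\beta \lambda^k$ yields a Laurent polynomial with $\tg(z,\lambda) = g(z_1^{q_1},\ldots,z_d^{q_d},\lambda)$, as required.

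There is no substantial obstacle here; the lemma is genuinely elementary, and the only points demanding a little care are the linear independence step (legitimizing coefficient comparison) and the degenerate case $q_j = 1$ in the arithmetic equivalence. It is worth recording that the same reasoning shows the collection $\{\mu_n : n \in W\}$ is exactly the character group dual to $\bigoplus_j \bbZ/q_j\bbZ$, which is precisely what makes invariance under all $\mu_n$ equivalent to periodicity of the exponents.
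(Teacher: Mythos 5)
Your proof is correct and follows essentially the same route as the paper: expand $\tg$ into Laurent monomials, compare coefficients, and test invariance against the basis vectors $e_j \in W$ to force $q_j \mid \alpha_j$ (the paper phrases this step contrapositively, choosing $n = e_i$ for an offending exponent, but the idea is identical). Your explicit handling of the degenerate case $q_j = 1$ is a nice touch that the paper leaves implicit.
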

\begin{proof}
 If $g(w,\lambda)$ satisfying \eqref{eq:fromPtocalP} exists, then it readily follows from the definition of $\mu_n\odot z$ that 
 \[
\widetilde{g}(\mu_n\odot z,\lambda) 
 = g\left((\mu_n^1z_1)^{q_1}, \ldots, (\mu_n^dz_d)^{q_d}, \lambda\right)
 = g\left(z_1^{q_1}, \ldots, z_d^{q_d},\lambda\right)
  = \widetilde{g}(z,\lambda).
 \] To see that the converse implication holds, write \[\widetilde{g}(z,\lambda) = \sum_{\ell \in \bbZ^d,m \in \bbZ} \widetilde{c}_{\ell,m}z^\ell\lambda^m \text{ and let } g(w,\lambda) = \sum_{\ell \in \bbZ^d,m \in \bbZ} c_{\ell,m}w^\ell\lambda^m\] be another Laurent polynomial. Note that 
 \[g(z_1^{q_1},\ldots,z_d^{q_d},\lambda) = \sum_{\ell \in \bbZ^d,m \in \bbZ} c_{\ell,m}z^{\ell\odot q}\lambda^m.\] Thus \eqref{eq:fromPtocalP} holds if and only if for all $m\in \mathbb{Z}$  \[  \widetilde{c}_{\ell,m} = \begin{cases} c_{r,m} & \ell = q\odot r \in \Gamma, \\ 0 & \text{otherwise,}\end{cases}\]  and hence $g$ satisfying \eqref{eq:fromPtocalP} exists if and only if $\widetilde{c}_{\ell,m}=0$ for all $\ell \notin \Gamma$ and $m\in \mathbb{Z}$. Thus, if \eqref{eq:fromPtocalP} does not hold, we must have $\widetilde{c}_{\ell,m}\neq 0$ for some $\ell \notin \Gamma$, say $q_i \! \not|\ \ell_i$ for some $i\in\{1,\cdots,d\}$.  Choose $n=e_i \in W$, and note that for this choice of $n$ one has\[ \widetilde{g}(z,\lambda) - \widetilde{g}(\mu_n\odot z,\lambda) \not\equiv 0, \] concluding the proof.
 \end{proof}
\begin{definition}\label{gammadfn'}
	For each $j\in \{1,2,\cdots,d\}$, define $\gamma_j' \geq 0$ as follows. We let $-\gamma_{j}'$ be the lowest exponent of $z_j$ in $h(z)$ in case this exponent is negative and $\gamma_j'=0$ otherwise.
\end{definition}
\begin{lemma}\label{rnirred} Let $p$ be a Laurent polynomial in $z_1,\ldots,z_d$ and let $h$ be  the lowest degree component of $p$. 
 Then, the polynomials 
 \[
 r_n(z,\tlambda)=\widetilde{\lambda} z^{\gamma_{1}'}_{1}\cdots z^{\gamma_{d}'}_{d}h(\mu_n\odot z)-z^{\gamma_{1}'}_{1}\cdots z^{\gamma_{d}'}_{d}
 \]
are irreducible in $\bbC[z,\widetilde\lambda]$ for each $n\in W$.
 Moreover, under Assumption {\rm\ref{assump2}}, we conclude that for any $n \neq n' \in W$, $r_n$ and $r_{n'}$ are relatively prime.
\end{lemma}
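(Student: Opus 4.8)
The plan is to prove the two assertions separately, as they have rather different flavors: irreducibility of each $r_n$ is essentially a structural observation about linearity in $\tlambda$, while relative primality for distinct indices is where Assumption~\ref{assump2} enters.

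First I would establish irreducibility of $r_n$ for fixed $n \in W$. The key observation is that $r_n$ is a polynomial of degree $1$ in the variable $\tlambda$, since
\[
r_n(z,\tlambda) = \tlambda \cdot \left( z_1^{\gamma_1'}\cdots z_d^{\gamma_d'} h(\mu_n\odot z)\right) - z_1^{\gamma_1'}\cdots z_d^{\gamma_d'}.
\]
Let me abbreviate $M(z) = z_1^{\gamma_1'}\cdots z_d^{\gamma_d'}$, which is an honest polynomial in $z$ (by the choice of the $\gamma_j'$ in Definition~\ref{gammadfn'}, clearing all negative exponents of $z_j$ in $h$) and is not the zero polynomial. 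Writing $r_n = \tlambda \cdot (M h_n) - M$, any factorization into two nonconstant polynomials in $\bbC[z,\tlambda]$ would, by comparing $\tlambda$-degrees, have to put the entire $\tlambda$-dependence into one factor; hence the other factor lies in $\bbC[z]$ and must divide both $M\, h_n$ and $M$. Thus it suffices to show that $M h_n$ and $M$ have no common nonconstant polynomial factor, equivalently that $h_n$ (as cleared of denominators by $M$) contributes no common factor with $M$ beyond units. The cleanest way to phrase this is: $\gcd$ of the two coefficients of $r_n$ viewed as a polynomial in $\tlambda$ is a unit. One should check that $M h_n$ and $M$ are coprime in $\bbC[z]$; since $M$ is a monomial, its only irreducible factors are the $z_j$, and by the minimality built into $\gamma_j'$, $M h_n$ is a polynomial not divisible by any $z_j$ that also divides $M$ — precisely, $M(z) h(\mu_n \odot z)$ has nonzero constant term in each variable that $M$ touches, because $-\gamma_j'$ was chosen as the \emph{lowest} exponent of $z_j$ appearing in $h$. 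I would make this coprimality precise and then invoke the standard fact that a degree-one polynomial $a(z)\tlambda - b(z)$ with $\gcd(a,b)$ a unit is irreducible.

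Next I would turn to relative primality of $r_n$ and $r_{n'}$ for $n \neq n'$. Here I would argue by contradiction: since each $r_n$ is irreducible, a nontrivial common factor would force $r_n$ and $r_{n'}$ to be associates, i.e. $r_{n'} = c\, r_n$ for some nonzero constant $c$ (they differ at most by a unit, and the only units whose product with a degree-one-in-$\tlambda$ polynomial stays degree-one-in-$\tlambda$ are constants — monomial units in $z$ would alter the $z$-support incompatibly, which one checks). Comparing the coefficient of $\tlambda$ and the constant term (in $\tlambda$) separately gives
\[
M\, h_{n'} = c\, M\, h_n \quad\text{and}\quad M = c\, M,
\]
the second of which forces $c = 1$, whence $M h_{n'} = M h_n$ and therefore $h_{n'} = h_n$. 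This contradicts Assumption~\ref{assump2}, which asserts exactly that the $h_n$ are pairwise distinct. I expect the main obstacle to be handling the unit ambiguity carefully: in the Laurent setting monomials are units, so I must work in the genuine polynomial ring $\bbC[z,\tlambda]$ (where the $r_n$ are honest polynomials by construction) and verify that the clearing factor $M$ really does render $r_n$ an irreducible \emph{polynomial}, not merely a well-defined Laurent polynomial. The verification that $\gcd(M h_n, M)$ is a unit — i.e. that multiplying $h(\mu_n\odot z)$ by $M$ exactly clears denominators without introducing a factor common to $M$ — is the delicate computational point, and it is precisely what Definition~\ref{gammadfn'} was set up to guarantee, since the phases $\mu_n^j$ have modulus one and hence do not change which exponents of $z_j$ appear in $h(\mu_n \odot z)$.
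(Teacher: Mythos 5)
Your proposal is correct and follows essentially the same route as the paper: irreducibility via the degree-one-in-$\tlambda$ structure, forcing any nontrivial factor into $\bbC[z]$ where it must divide the monomial $z_1^{\gamma_1'}\cdots z_d^{\gamma_d'}$, and then ruling this out by the minimality of the exponents $-\gamma_j'$ in $h(\mu_n\odot z)$; relative primality then follows exactly as in the paper by noting that two irreducible polynomials with a common factor are constant multiples of each other, which forces $h_n = h_{n'}$ and contradicts Assumption~\ref{assump2}. Your extra care about monomial units is harmless but unneeded, since in the polynomial ring $\bbC[z,\tlambda]$ the only units are nonzero constants.
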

\begin{proof}
Assume for the sake of contradiction that  $r_n(z,\tlambda)$ is reducible. Since the degree of $\tlambda$ in $r_n(z,\tlambda)$ is one, we must have that
\begin{equation}\label{rirred}r_n(z,\tlambda)=f(z,\tlambda)g(z)
\end{equation}
for non-constant polynomials $f(z,\tlambda)$ and $g(z)$. 
Since $\tlambda$ does not divide $r_n(z,\tlambda)$  in $\bbC[z]$, we see that there exist non-zero polynomials $f_1(z)$ and $f_2(z)$ such that 
\[f(z,\tlambda)=\tlambda f_1 (z) - f_2(z).\]
From \eqref{rirred} and the definition of $r_n(z,\tlambda)$ we obtain $f_2(z)g(z) = z^{\gamma_{1}'}_{1}\cdots z^{\gamma_{d}'}_{d}$. 
In particular, $g(z)= z^{m_1}_{1}\cdots z^{m_d}_{d}$ where $m_1,\ldots,m_d$ are integers with $0\leq m_j\leq \gamma_j'$ for $j\in\{1,\ldots,d\}$. 
Since $g$ is nonconstant, $m_l >0$ for at least one $l$. In particular, $$\gamma_l' \geq m_l>0.$$ Consequently, \eqref{rirred} implies that the polynomial $z^{\gamma_{1}'}_{1}\cdots z^{\gamma_{d}'}_{d}h(\mu_n\odot z)$ is divisible by $z_l$ for some $l\in \{1,2,\ldots,d\}.$ However, the lowest degree of $z_l$ in $h(\mu_n\odot z)$ is, by definition, equal to $-\gamma_{l}'$. Thus $z^{\gamma_{1}'}_{1}\cdots z^{\gamma_{d}'}_{d}h(\mu_n\odot z)$ is not divisible by $z_l$, contradicting \eqref{rirred}. 
Consequently, $r_n$ is irreducible.

To prove the second statement of the lemma, assume $r_n$ and $r_{n'}$ share a nontrivial common factor.  By irreducibility, they must be constant multiples of one another. However, from the definition, this is only possible if $r_n=r_{n'}$, which contradicts Assumption~\ref{assump2}.
\end{proof}

Let us introduce the auxiliary polynomial
\begin{equation} \label{eq:tildeADef}
\widetilde{a}(z,\tlambda)=\prod_{n\in W}r_n(z,\tlambda)    
\end{equation}
with $r_n(z,\tlambda)$ as in Lemma~\ref{rnirred}  for $n\in W$.
By a direct calculation, $\widetilde{a}(z,\tlambda)$ is invariant under the replacement\footnote{Later on, we will call this the \emph{action} of $\mu_n$ on a polynomial.} $z \mapsto \mu_n \odot z$, so, as a consequence of Lemma~\ref{liftlemma}, there exists $a(z,\tlambda)$ such that
\begin{equation}\label{tiladfn}
\widetilde{a}(z,\tlambda)=a(z_1^{q_1},\ldots,z_d^{q_d},\tlambda).
\end{equation}

\begin{lemma}\label{airred}
	Under Assumption~{\rm\ref{assump2}}, the   polynomial $a(z,\tlambda)$ given by \eqref{tiladfn} is irreducible in $\bbC[z,\tlambda]$.
\end{lemma}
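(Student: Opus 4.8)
The plan is to deduce irreducibility of $a(z,\tlambda)$ from the already-established irreducibility of the individual factors $r_n$ (Lemma~\ref{rnirred}) by carefully tracking what the change of variables $w_j = z_j^{q_j}$ does to a factorization. First I would recall that $\widetilde{a}(z,\tlambda) = \prod_{n\in W} r_n(z,\tlambda)$ is, by construction, a factorization into irreducibles in $\bbC[z,\tlambda]$, and that by the second part of Lemma~\ref{rnirred} these factors are pairwise relatively prime under Assumption~\ref{assump2}. The key subtlety is that irreducibility of $\widetilde{a}$ in the $z$-variables does \emph{not} directly transfer to $a$, since $a$ lives in the coarser ring obtained after substituting $z_j^{q_j} \mapsto w_j$; a genuine factorization of $a(w,\tlambda)$ would pull back, via $w_j = z_j^{q_j}$, to a factorization of $\widetilde{a}(z,\tlambda)$ that is \emph{invariant} under the action $z\mapsto \mu_n\odot z$.

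The heart of the argument is therefore the following. Suppose toward a contradiction that $a = \phi\psi$ is a nontrivial factorization in $\bbC[z,\tlambda]$ (treating $a$ as a polynomial in the variables now called $z_1,\dots,z_d$ and $\tlambda$). Pulling back through $w_j = z_j^{q_j}$ yields $\widetilde{a}(z,\tlambda) = \widetilde{\phi}(z,\tlambda)\widetilde{\psi}(z,\tlambda)$, where $\widetilde{\phi}, \widetilde{\psi}$ are the pullbacks and each is invariant under every $\mu_n$. By uniqueness of factorization in $\bbC[z,\tlambda]$, each of $\widetilde{\phi}$ and $\widetilde{\psi}$ must (up to a unit, i.e.\ a nonzero scalar since these are genuine polynomials) be a product of a sub-collection of the irreducible factors $r_n$. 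Thus there is a partition $W = W_1 \sqcup W_2$ into nonempty parts with $\widetilde{\phi} = c\prod_{n\in W_1} r_n$ and $\widetilde{\psi} = c^{-1}\prod_{n\in W_2} r_n$. The plan is then to show that such a grouping cannot be $\mu_n$-invariant, thereby contradicting the existence of a nontrivial factorization.

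The main obstacle, and the crux of the proof, is this invariance analysis. I would examine how the action $z\mapsto \mu_m\odot z$ permutes the factors $\{r_n\}_{n\in W}$. Using the identity $p_{n+m}(z)=p_n(\mu_m\odot z)$ noted before Theorem~\ref{mainthm}, together with the analogous relation for $h$, one checks that the action of $\mu_m$ sends $r_n$ to (a scalar multiple of) $r_{n+m}$, where the index addition is taken modulo $\Gamma$; the prefactor $z_1^{\gamma_1'}\cdots z_d^{\gamma_d'}$ transforms by a unimodular scalar and so does not interfere with the matching. Hence the action of $W$ (acting by translation modulo $\Gamma$) permutes the irreducible factors \emph{transitively}. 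Since $\widetilde{\phi}$ is invariant, its factor set $W_1$ must be invariant under this transitive translation action, forcing $W_1 = W$ and hence $W_2 = \emptyset$, contradicting nontriviality. Therefore no nontrivial factorization of $a$ exists, establishing irreducibility.

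In carrying this out I would be careful about two technical points. First, one must verify that the factors $\widetilde{\phi},\widetilde{\psi}$ obtained from a factorization of $a$ are honestly nonconstant polynomials (not monomial units), so that the contradiction is genuine; this is where working in $\bbC[z,\tlambda]$ rather than the Laurent ring, and the fact that each $r_n$ has $\tlambda$-degree one, keeps the bookkeeping clean, since the total $\tlambda$-degree $Q$ must split between the two factors. Second, one must confirm that pulling back a factorization of $a$ through $w_j=z_j^{q_j}$ preserves nontriviality and does not accidentally introduce monomial factors; this follows because the substitution is injective on monomials up to the $\mu_n$-action, which is precisely the content of Lemma~\ref{liftlemma}. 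I expect the transitivity-of-the-permutation-action step to be where all the real work lies, with the rest being formal consequences of unique factorization and Lemma~\ref{liftlemma}.
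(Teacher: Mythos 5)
Your proposal is correct and follows essentially the same route as the paper: pull the factorization of $a$ back through $w_j = z_j^{q_j}$, use irreducibility and pairwise coprimality of the $r_n$ (Lemma~\ref{rnirred}) to split the factor set, and then use $\mu_n$-invariance of the pullbacks together with the transitive permutation of the factors $r_n \mapsto (\text{scalar})\, r_{n+m}$ to force one factor to contain all of $\widetilde{a}$, hence the other to be constant. Your write-up actually makes explicit the transitivity argument that the paper states tersely, but it is the same proof.
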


\begin{remark}
It is important that we pass to the lift $a$ here, since $\widetilde a$ is clearly reducible. 
\end{remark}

\begin{proof}[Proof of Lemma~\ref{airred}]
	Suppose for the sake of establishing a contradiction that $a(z,\tlambda)$ is reducible, and write
	\begin{equation}\label{areducible}
	   a(z,\tlambda)={f}_1(z,\tlambda){g}_1(z,\tlambda)
	\end{equation} for non-constant polynomials $f_1$ and $g_1$. Let $\widetilde{f}_1(z,\tlambda)={f}_1(z_1^{q_1},\ldots,z_d^{q_d},\tlambda)$ and $\widetilde{g}_1(z,\tlambda)={g}_1(z_1^{q_1},\ldots,z_d^{q_d},\tlambda)$. Combining \eqref{tiladfn} and \eqref{areducible} yields 
	\[\widetilde {a}(z,\tlambda)=\widetilde{f}_1(z,\tlambda)\widetilde{g}_1(z,\tlambda).\] 
	 Moreover, by definition $\widetilde{f}_1(z,\tlambda)$ and $\widetilde{g}_1(z,\tlambda)$ are both invariant under the action of each $\mu_n$. 
	 Recall from Lemma~\ref{rnirred} that each $r_n(z,\tlambda)$ is irreducible. Therefore, each $r_n(z,\tlambda)$ is a factor of either $\widetilde{f}_1$ or $\widetilde{g}_1$. 
	 By invariance of $\widetilde{f}_1(z,\tlambda)$ (respectively $\widetilde{g}_1(z,\tlambda)$) under the action of each $\mu_n$ and since, by Lemma ~\ref{rnirred}, $r_n$ and $r_{n'}$ are relatively prime for $n\neq n'$, we conclude the following: if $\widetilde{f}_1(z,\tlambda)$ (respectively $\widetilde{g}_1(z,\tlambda)$) has a factor of $r_n(z,\tlambda)$ then it must have a factor of
	\[ \prod_{n\in W}r_n(z,\tlambda)=\widetilde{a}(z,\tlambda).\] 
	However, this, together with \eqref{areducible}, implies that either $\widetilde{f}_1(z,\tlambda)$ or $\widetilde{g}_1(z,\tlambda)$ must be constant, which is a contradiction. Thus, we conclude that ${a}(z,\tlambda)$ is irreducible.
\end{proof}

\begin{lemma}\label{Lemma:irredmeets}
 Let $\mathcal{P}(z,\lambda)$ be given by \eqref{Pdef} and let $f$ be any irreducible factor of $\mathcal{P}$. 
 Then
 $f$ must depend on   $ \lambda.$
\end{lemma}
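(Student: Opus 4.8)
The plan is to argue by contradiction, exploiting the fact that $\mathcal{P}$ is essentially monic in $\lambda$. Suppose some irreducible factor $f$ of $\mathcal{P}$ does not depend on $\lambda$, so that $f = f(z) \in \bbC[z,z^{-1}]$ is a non-monomial Laurent polynomial dividing $\mathcal{P}$. I would then write $\mathcal{P}(z,\lambda) = f(z)\,G(z,\lambda)$ for some Laurent polynomial $G$.

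The key structural input is that, viewed as a polynomial in $\lambda$ with coefficients in $\bbC[z,z^{-1}]$, the lift $\mathcal{P}$ has $\lambda$-degree exactly $Q$ and constant leading coefficient. Indeed, by \eqref{eq:PmonicInlambda} one has $\widetilde{\mathcal{P}}(z,\lambda)=(-1)^Q\lambda^Q+\sum_{k<Q}b_k(z)\lambda^k$, and since the substitution $w_j \mapsto z_j^{q_j}$ relating $\mathcal{P}$ and $\widetilde{\mathcal{P}}$ in \eqref{Pdef} carries distinct monomials in $w$ to distinct monomials in $z$, any coefficient of $\mathcal{P}(w,\lambda)$ that becomes constant after the substitution must itself be constant. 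Hence the $\lambda^Q$-coefficient of $\mathcal{P}$ equals $(-1)^Q$, and $\deg_\lambda \mathcal{P}=Q$.

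With this in hand, I would first note that $G$ is in fact a genuine polynomial in $\lambda$ of degree $Q$: writing $G=\sum_{k=k_{\min}}^{k_{\max}}g_k(z)\lambda^k$ with $g_{k_{\min}},g_{k_{\max}}\neq 0$, the identity $\mathcal{P}=fG$ forces $k_{\min}\geq 0$ and $k_{\max}=Q$ because $f\neq 0$ and $\mathcal{P}$ has no negative powers of $\lambda$ and top power $\lambda^Q$. Comparing coefficients of $\lambda^Q$ on both sides then yields $f(z)\,g_Q(z)=(-1)^Q$, i.e., the product of two Laurent polynomials equals a nonzero constant. Since the units of $\bbC[z,z^{-1}]$ are exactly the nonzero monomials, both $f$ and $g_Q$ must be monomials, contradicting the assumption that $f$ is a non-monomial (hence nontrivial) irreducible factor.

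I do not anticipate a serious obstacle here; the argument is at bottom the observation that a polynomial that is monic in $\lambda$ cannot have a factor depending only on $z$. The two points requiring care are (i) verifying that monicity in $\lambda$ survives the passage from $\widetilde{\mathcal{P}}$ to its lift $\mathcal{P}$ through \eqref{Pdef}, and (ii) correctly invoking the Laurent-polynomial notion of irreducibility, so that ``non-monomial'' is the operative notion of nontriviality and the units are precisely the monomials.
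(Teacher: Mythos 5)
Your proof is correct, but it closes the argument by a different mechanism than the paper. Both proofs rest on the same structural fact, namely that $\mathcal{P}$ inherits from \eqref{eq:PmonicInlambda} the property of having $\lambda$-degree exactly $Q$ with \emph{constant} leading coefficient $(-1)^Q$. The paper then argues geometrically: if $f=f(z)$ were a factor of $\mathcal{P}$ independent of $\lambda$, one could pick $z_0$ with $f(z_0)=0$, and then $\mathcal{P}(z_0,\lambda)$ would vanish identically in $\lambda$, which is impossible since $\mathcal{P}(z_0,\cdot)$ is a polynomial of degree exactly $Q$ in $\lambda$. You instead compare coefficients of $\lambda^Q$ in the factorization $\mathcal{P}=fG$, obtaining $f(z)\,g_Q(z)=(-1)^Q$, so that $f$ divides a nonzero constant and is therefore a unit of $\bbC[z,z^{-1}]$, i.e.\ a monomial, contradicting the fact that an irreducible factor is by definition not a unit. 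The two routes buy slightly different things. Your algebraic version sidesteps a point the paper leaves implicit: that a non-monomial Laurent polynomial actually possesses a zero in $(\bbC^\star)^d$ (this is true, but it deserves an argument, e.g.\ via the very units-are-monomials fact you invoke, or a Nullstellensatz-type statement for the torus). You also spell out why monicity in $\lambda$ survives the passage from $\widetilde{\mathcal{P}}$ to $\mathcal{P}$ through \eqref{Pdef} --- injectivity of $\alpha\mapsto q\odot\alpha$ on exponents --- where the paper simply cites \eqref{eq:PmonicInlambda} and \eqref{Pdef}. The paper's version, in exchange, is shorter and makes the geometric content visible at a glance: the Bloch variety has no ``vertical'' components over the zero set of a $z$-only factor.
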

\begin{proof}
  If $f$ is an irreducible factor of $\mathcal{P}$, then $f$ must depend on $\lambda$ since otherwise there would be a suitable choice of $z=(z_1,\ldots,z_d)$,  namely any solution of $f(z)=0$, for which $\mathcal{P}(z,\lambda)=0$ for any $\lambda$. This, in turn, contradicts the fact that the term of highest degree of $\lambda$ in $\mathcal{P}(z,\lambda)$ is  ${(-1)^Q}\lambda^Q$ (see \eqref{eq:PmonicInlambda} and \eqref{Pdef}).
\end{proof}
\section{Proof of Theorem~\ref{mainthm}}
\label{sec:mainproof}
Before proceeding with the proof of the main result, Theorem~\ref{mainthm}, let us introduce some notation.
\begin{definition}\label{gammadfn}
	For each $j\in \{1,2,\ldots,d\}$ denote by $-\gamma_{j}$ the lowest exponent of $z_j$ in $p(z)$ in case this exponent is negative and $\gamma_j=0$ otherwise. Clearly, $\gamma_j \geq \gamma_j'$ with $\gamma_j'$ given in Definition~\ref{gammadfn'}.
\end{definition}

\begin{proof}[Proof of Theorem~\ref{mainthm}]
 Let $\tlambda=\lambda^{-1}$. Then $\mathcal{P}(z,\lambda)=\mathcal{P}(z,\tlambda^{-1})$ is a Laurent polynomial in the variables $(z,\tlambda)$. 
 Let $\gamma_j$, $j=1,\ldots,d$ be as in Definition~\ref{gammadfn}. In case $\gamma_j>0$ for some $j\in\{1,\ldots,d\}$,
	 the lowest power of $z_j$ in  $\mathcal{P}(z,\tlambda^{-1})$ is $-\gamma_jQ/q_j$. 
	 
	  Moreover, the lowest power of $\tlambda$ in $\mathcal{P}(z,\tlambda^{-1})$ is $\tlambda^{-Q}$ (cf.\ \eqref{eq:PmonicInlambda}), so
	 \begin{equation}
	 \mathcal{R}(z,\tlambda)
	 = \left(\widetilde{\lambda}z^{\frac{\gamma_1}{q_1}}_{1}\cdots z^{\frac{\gamma_d}{q_d}}_{d}\right)^Q\mathcal{P}(z,\tlambda^{-1})  \end{equation}
	 defines a polynomial $\mathcal{R} \in \bbC[z,\tlambda]$.
\begin{claim}\label{cl:mainproof}
	For each $1 \le j \le d$, $z_j$ does not divide $\mathcal{R}(z,\tlambda)$.
\end{claim}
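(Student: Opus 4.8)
The plan is to convert the divisibility statement into a single computation: the lowest power of $z_j$ occurring in $\mathcal{P}$. Since $\mathcal{R}(z,\tlambda) = \tlambda^Q\bigl(\prod_{i=1}^d z_i^{\gamma_i Q/q_i}\bigr)\,\mathcal{P}(z,\tlambda^{-1})$ is obtained from $\mathcal{P}(z,\tlambda^{-1})$ by multiplying by a single monomial, and since the substitution $\lambda \mapsto \tlambda^{-1}$ leaves every $z$-exponent untouched, $z_j$ divides $\mathcal{R}$ if and only if the lowest power of $z_j$ appearing in $\mathcal{P}(z,\lambda)$ is strictly greater than $-\gamma_j Q/q_j$ (it cannot be smaller, as $\mathcal{R}\in\bbC[z,\tlambda]$ is a genuine polynomial). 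Through the relation $\widetilde{\mathcal{P}}(z,\lambda)=\mathcal{P}(z_1^{q_1},\ldots,z_d^{q_d},\lambda)$, in which the exponent of $z_j$ in $\widetilde{\mathcal{P}}$ is exactly $q_j$ times that of $z_j$ in $\mathcal{P}$, this is equivalent to the lowest power of $z_j$ in $\widetilde{\mathcal{P}}$ being $-\gamma_j Q$. So it suffices to compute that minimal degree from the expansion \eqref{mathcalP} directly, and I would split into the two cases of Definition~\ref{gammadfn}.

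When $\gamma_j>0$, the lowest exponent of $z_j$ in $p$ is $-\gamma_j$; because $\mu_n\odot z$ merely rescales $z_j$ by a root of unity, each $p_n$ also has lowest $z_j$-exponent $-\gamma_j$, with a nonzero coefficient $c_n$ that is a Laurent polynomial in the remaining variables. The key structural fact is that the \emph{lowest-$z_j$-degree part} of a product equals the product of the lowest-$z_j$-degree parts of the factors. Applied to the leading term $\prod_{n\in W}(p_n-\lambda)$ — in which the $-\lambda$ in each factor has $z_j$-degree $0>-\gamma_j$ and is thus irrelevant — this gives lowest $z_j$-degree exactly $-\gamma_j Q$ with coefficient $\prod_{n\in W}c_n$, which is nonzero since the Laurent polynomials in the remaining variables form an integral domain. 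As every $X\in\mathcal{S}$ is a proper subset of $W$, each summand $\prod_{n\in X}(p_n-\lambda)$ has lowest $z_j$-degree at least $-\gamma_j|X|>-\gamma_j Q$, so the correction sum in \eqref{mathcalP} can neither lower the minimal degree nor cancel the leading contribution. Hence the lowest power of $z_j$ in $\widetilde{\mathcal{P}}$ is $-\gamma_j Q$, as required.

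When $\gamma_j=0$, every exponent of $z_j$ in $p$, and therefore in $\widetilde{\mathcal{P}}$, is nonnegative, so it remains only to exhibit a term with $z_j$-exponent $0$, i.e.\ to check that $\widetilde{\mathcal{P}}(z,\lambda)\big|_{z_j=0}\not\equiv 0$. This is immediate from \eqref{eq:PmonicInlambda}: the coefficient of $\lambda^Q$ in $\widetilde{\mathcal{P}}$ is the nonzero constant $(-1)^Q$, which is unchanged by setting $z_j=0$, so $\widetilde{\mathcal{P}}\big|_{z_j=0}$ is a nonzero (degree-$Q$) polynomial in $\lambda$. This yields lowest $z_j$-degree $0=-\gamma_j Q$ and finishes the case.

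The only genuinely delicate point is the no-cancellation step when $\gamma_j>0$: one must verify both that the minimal-$z_j$-degree coefficients of the $Q$ factors do not multiply to zero and that the $\mathcal{S}$-indexed corrections sit at strictly higher $z_j$-degree. Both follow from the strict inequalities $\gamma_j>0$ and $|X|<Q$ together with multiplicativity of the lowest-degree component. I would also note, for the reader's orientation, that this claim uses neither Assumption~\ref{assump1} nor Assumption~\ref{assump2}; only the product structure \eqref{mathcalP}, the monic normalization \eqref{eq:PmonicInlambda}, and the definition of $\gamma_j$ enter.
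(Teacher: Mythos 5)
Your proof is correct and follows essentially the same route as the paper's: a case split on $\gamma_j>0$ versus $\gamma_j=0$, with the first case settled by tracking the minimal $z_j$-exponent $-\gamma_jQ$ through the product structure \eqref{mathcalP} and the second by the monic normalization \eqref{eq:PmonicInlambda}. The paper dismisses the $\gamma_j>0$ case as ``clear from the definitions''; your no-cancellation argument (multiplicativity of the lowest-$z_j$-degree part plus the strict bound $|X|<Q$ for $X\in\mathcal{S}$) is exactly the detail being suppressed there, so you have simply made the same proof explicit.
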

\begin{claimproof}
Indeed, if $\gamma_j>0$, this is clear from the definitions, since $-\gamma_j$ is the smallest power of $z_j$ in $p$ and hence $-\gamma_j Q/q_j$ is the smallest power of $z_j$ in $\mathcal{P}$.
Otherwise, $\gamma_j = 0$, and the claim can be seen from \eqref{eq:PmonicInlambda}.\end{claimproof}

	 Since $\tlambda$ also does not divide $\mathcal{R}(z,\tlambda)$, Claim~\ref{cl:mainproof} implies that reducibility of the Laurent polynomial $\mathcal{P}(z,\tlambda^{-1})$ is equivalent to reducibility of the polynomial $\mathcal{R}(z,\tlambda)$.
	 
	 Now, assume for the sake of contradiction that $\mathcal{P}(z,\tlambda^{-1})$ is reducible.
	 There exist $m>1$ and non-constant polynomials $f_{l}(z,\tlambda)$, $l=1,2,\ldots,m$, in  $\bbC[z,\tlambda]$ such that
	   \begin{equation}\label{contradictionsetup}
	  \left(\widetilde{\lambda}z^{\frac{\gamma_1}{q_1}}_{1}\cdots z^{\frac{\gamma_d}{q_d}}_{d}\right)^Q\mathcal{P}(z,\tlambda^{-1})=\prod_{l=1}^mf_l(z,\tlambda).\end{equation}
Let us recall the auxiliary polynomial  $\widetilde{a}$ given by
\[\widetilde{a}(z,\tlambda):=\left(\widetilde{\lambda}z^{\gamma_{1}'}_{1}\cdots z^{\gamma_{d}'}_{d}\right)^Q\prod_{n\in W}(h(\mu_n\odot z) -{\widetilde{\lambda}}^{-1}).\]

	    Let $\tf_l(z,\tlambda)=f_l(z_1^{q_1},\ldots,z_d^{q_d},\tlambda)$.
	    Then, by \eqref{Pdef} and \eqref{contradictionsetup}, we have that
	   \begin{equation}\label{contradicteq}
	       \left(\widetilde{\lambda}z^{\gamma_{1}}\cdots z^{\gamma_{d}}\right)^Q\widetilde{\mathcal{P}}(z,\tlambda^{-1})=\prod_{l=1}^m\tf_l(z,\tlambda).
	   \end{equation}
	   By definition of $\widetilde{\mathcal{P}}$ in (\ref{mathcalP}) one sees that
	  replacing $\tlambda$ by $\tlambda^\gamma$ for $\gamma=-{\rm deg} (h)>0$ allows us to conclude that the lowest degree component of $\left({\widetilde{\lambda}}^\gamma z^{\gamma_{1}}\cdots z^{\gamma_{d}}\right)^Q\widetilde{\mathcal{P}}(z,\tlambda^{-\gamma})$ is given by $\widetilde{a}_1(z,\tlambda^\gamma)$,
	  where 
	  \begin{equation}\label{g11}
\widetilde{a}_1(z,\tlambda^\gamma)=\left(\widetilde{\lambda}^\gamma z^{\gamma_{1}}_{1}\cdots z^{\gamma_{d}}_{d}\right)^Q\prod_{n\in W}(h(\mu_n\odot z) -{\widetilde{\lambda}}^{-\gamma})= (z_1^{\gamma_1-\gamma_1^\prime}\cdots z_d^{\gamma_d-\gamma_d'})^Q\widetilde{a}(z,\tlambda^\gamma).
	  \end{equation}
	  We denote by $\tf_l^1(z,\tlambda^{\gamma})$  the lowest degree components of $\tf_l(z,\tlambda^{\gamma})$, $l=1,2,\ldots,m$. From (\ref{contradicteq}) it must be that
	  \begin{equation}
\prod_{l=1}^m\tf_l^1 (z,\tlambda^{\gamma})=\widetilde{a}_1(z,\tlambda^\gamma)
	  \end{equation}
	  and hence
	    \begin{equation}\label{g12}
	  \prod_{l=1}^m\tf_l^1 (z,\tlambda)=\widetilde{a}_1(z,\tlambda).
	  \end{equation}
	  Given $l\in\{1,\ldots,m\}$, $\tf_l^1 (z,\tlambda)$ is a polynomial in $z_1^{q_1},z_2^{q_2}, \ldots, z_d^{q_d}$. Thus, there exists 
	  $f_l^1 (z,\tlambda)$ such that 
	  \begin{equation}\label{g13}
  \tf_l^1 (z,\tlambda)=  f_l^1 (z_1^{q_1},\ldots,z_d^{q_d},\tlambda). 
	  \end{equation}
	  By \eqref{g11}, \eqref{g12} and \eqref{g13}, 
	   we reach, recalling the definition of ${a}(z,\tlambda)$ in (\ref{tiladfn}),
	     \begin{equation}
	   \prod_{l=1}^mf_l^1 (z,\tlambda)=\left( z_1^{\frac{\gamma_1-\gamma_1'}{q_1}} z_2^{\frac{\gamma_2-\gamma_2'}{q_2}} \cdots z_d ^{\frac{\gamma_d-\gamma_d'}{q_d}}  \right)^Q{a}(z,\tlambda).
	   \end{equation}
	   By Lemma \ref{airred}, ${a}(z,\tlambda)$ is irreducible, so there exists $j\in\{1,2,\ldots,m\}$ such that 
	   $f_j^1 (z,\tlambda)$ has a factor ${a}(z,\tlambda)$. 
	  	We  conclude that the highest power of $\tlambda$ in $\tf_j(z,\widetilde{\lambda})$ (hence in $f_j (z,\tlambda)$) is at least $Q$.  
	  	Since $m>1$ and, by Lemma~\ref{Lemma:irredmeets} and Claim ~\ref{cl:mainproof}, $\tf_l (z,\tlambda)$ , $l=1,2,\ldots, m$,  must depend on $\tlambda$ we reach a contradiction since the highest power of $\tlambda$ on the left-hand side of \eqref{contradicteq} is equal to $Q$. 
	    \end{proof}
	    \section{Floquet Theory for Long-Range Operators} \label{sec:floquet}

Let us summarize some of the important points about Floquet theory for operators with long-range interactions.
This is well-known, especially in the continuum case; see the survey \cite{Kuchment2016BAMS} and references therein. We are unaware of a precise reference in the discrete setting for long-range operators, so we included the details for the reader's convenience.

Let us assume that $A:\ell^2(\bbZ^d) \to \ell^2(\bbZ^d)$ is bounded. Writing $A_{n,m} = \langle \delta_n,A\delta_m\rangle$ for the matrix elements, we further assume that $A$ is translation-invariant in the sense that 
\[
A_{n+\ell,m+\ell} = A_{n,m} \ \forall n,m,\ell \in \bbZ^d,
\] 
and that $A$ satisfies the decay estimate
\[|A_{n,m}| \leq C e^{-\nu|n-m|}\]
for constants $C,{\nu}>0$. By translation-invariance, $A$ is fully encoded by $\{a_n := A_{n,0}\}_{n \in \bbZ^d}$ via
\[ [A\psi]_n = \sum_{m \in \Z^d} a_{n-m} \psi_m.\]

We denote the Fourier transform on $\ell^2(\bbZ^d)$ by $\scrF:u \mapsto \widehat{u}$, where
\[\widehat{u}(x) =\sum_{n \in \bbZ^d} e^{-2\pi i \langle n,x\rangle} u_n, \]
for $u \in \ell^1(\bbZ^d)$ and then extended to $\ell^2$ by Plancherel.

By the assumptions on $A$, the \emph{symbol} $\widehat{a}$ is analytic, real-analytic whenever $a_n = \overline{a_{-n}}$, and a trigonometric polynomial whenever $a$ is finitely supported. For example, when $A = -\Delta$ denotes the Laplacian on $\bbZ^d$, 
\[\widehat{a}(x) = -2\sum_{j=1}^d\cos(2\pi x_j).\]

Recall that $V:\bbZ^d \to \bbC$ is $q$-periodic and $\Gamma= \{q \odot m : m \in \bbZ^d\}$ denotes the period lattice.
We  define the dual lattice $\Gamma^* = \{ (m_1/q_1,\ldots,m_d/q_d) : m_j \in \bbZ\}$ and
\[W^* := \Gamma^* \cap [0,1)^d = \set{0,\frac{1}{q_1}, \ldots,\frac{q_1-1}{q_1}} \times \cdots \times \set{0,\frac{1}{q_d}, \ldots,\frac{q_d-1}{q_d}}. \]
  The discrete Fourier transform of a $q$-periodic function $g:\bbZ^d \to \bbC$ is defined by
\begin{equation}\label{eq:discretefourier} \widehat{g}_\ell = \frac{1}{\sqrt{Q}} \sum_{n \in W} e^{-2\pi i \langle n, \ell \rangle}g_n, \quad \ell \in W^*.
\end{equation}
Of course, this also makes sense for $\ell \in \Gamma^*$ and satisfies $\widehat{g}_{\ell+n}= \widehat{g}_{\ell}$   for any $\ell \in W^*$ and any $n\in \bbZ^d$. One can check the inversion formula
\begin{align} \label{eq:floq:inversion}
\frac{1}{\sqrt{Q}}
\sum_{\ell \in W^*} e^{2\pi i \langle \ell,n \rangle} \widehat{g}_{\ell}
& = g_n, \ \forall n \in \bbZ^d,
\end{align}
which holds for any $q$-periodic $g$.

Let $\bbT^d = \bbR^d/\bbZ^d$ denote the torus.

\begin{prop}
For any $f \in L^2(\bbT^d)$,
\[[\scrF A \scrF^* f](x)= \widehat{a}(x) f(x)\]
and
\[[\scrF V \scrF^* f](x)
= \frac{1}{\sqrt{Q}} \sum_{\ell \in {W}^*}  \widehat{V}_\ell f(x-\ell).\]
\end{prop}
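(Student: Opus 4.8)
The plan is to establish the two identities separately by direct computation, since each reduces to a familiar transform principle. The first identity is the statement that $\scrF$ diagonalizes the Toeplitz (convolution) operator $A$ into multiplication by its symbol, while the second expresses the classical fact that a periodic multiplication operator becomes, on the Fourier side, a finite combination of translations by the dual-lattice vectors $\ell \in W^*$.

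For the first identity, I would first verify the formula on a dense subspace---say the finitely supported sequences, or more generally $\ell^1(\bbZ^d)$---where all manipulations are justified by absolute convergence. For such $\psi$, writing $\widehat{A\psi}(x) = \sum_{n} e^{-2\pi i \langle n,x\rangle} \sum_m a_{n-m}\psi_m$ and reindexing by $k = n-m$ factors the double sum:
\[
\widehat{A\psi}(x) = \Big(\sum_{k} e^{-2\pi i\langle k,x\rangle} a_k\Big)\Big(\sum_{m} e^{-2\pi i \langle m,x\rangle}\psi_m\Big) = \widehat{a}(x)\,\widehat{\psi}(x).
\]
The interchange of summations is legitimate because $a$ decays exponentially (so $a \in \ell^1$) and $\psi$ is finitely supported. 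Setting $\psi = \scrF^* f$, so that $\widehat\psi = f$, yields $[\scrF A \scrF^* f](x) = \widehat a(x) f(x)$ for $f$ in a dense subset of $L^2(\bbT^d)$. To pass to all of $L^2(\bbT^d)$ I would invoke continuity of both sides: $A$ is bounded on $\ell^2(\bbZ^d)$ and $\scrF$ is unitary, so the left-hand side is a bounded operator, while the right-hand side is multiplication by the bounded continuous symbol $\widehat a$, hence also bounded. Agreement on a dense set then forces agreement everywhere.

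For the second identity, I would insert the discrete Fourier inversion formula \eqref{eq:floq:inversion} for the $q$-periodic sequence $V$, namely $V_n = Q^{-1/2}\sum_{\ell\in W^*} e^{2\pi i\langle \ell,n\rangle}\widehat V_\ell$, into the defining sum. With $\psi = \scrF^* f$ (so $\widehat\psi = f$),
\[
[\scrF V \scrF^* f](x) = \sum_n e^{-2\pi i\langle n,x\rangle} V_n \psi_n = \frac{1}{\sqrt{Q}}\sum_{\ell\in W^*}\widehat V_\ell \sum_n e^{-2\pi i \langle n, x-\ell\rangle}\psi_n = \frac{1}{\sqrt{Q}}\sum_{\ell\in W^*}\widehat V_\ell\, f(x-\ell),
\]
where the last step recognizes the inner sum as $\widehat\psi(x-\ell) = f(x-\ell)$, the translate of $f$ on the torus (well defined since each $\ell \in \Gamma^*$ acts by translation on $\bbT^d$). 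The crucial simplification is that the sum over $W^*$ is finite (it has exactly $Q$ terms), so interchanging it with the sum over $n\in\bbZ^d$ requires no delicate justification beyond the already-established meaning of $\widehat\psi$ for $\psi \in \ell^2(\bbZ^d)$.

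I expect the only genuine care to be needed in the first identity, specifically the rearrangement of the double sum and the extension from a dense subspace to all of $L^2(\bbT^d)$; both are standard, handled respectively by absolute convergence on $\ell^1$ and by the boundedness/unitarity continuity argument. The second identity is essentially formal once the inversion formula is substituted, since all sums over the dual lattice are finite. Thus the proposition reduces to two short computations with no substantive obstacle.
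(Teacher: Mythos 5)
Your proof is correct and follows exactly the route the paper intends: the paper's own proof is simply the remark that both identities follow ``from direct calculations using the definitions\ldots and the inversion formula,'' and your computations (reindexing the convolution sum to get multiplication by $\widehat a$, and substituting the inversion formula for $V_n$ to produce the finite sum of translates) are precisely those calculations, with the density/boundedness extension made explicit.
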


\begin{proof}
These follow from  direct calculations using the definitions of and assumptions on $A$ and $V$ and the inversion formula \eqref{eq:floq:inversion}. 
\end{proof}

Let us now define $\bbT^d_* = \bbR^d/\Gamma^*$,
\[\scrH_q =  \int_{\bbT^d_*}^\oplus \bbC^W \, \frac{dx}{|\bbT^d_*|} = L^2\left(\bbT^d_*,\bbC^W; \frac{dx}{|\bbT^d_*|}\right) \] 
and $\scrF_q : \ell^2(\bbZ^d) \to \scrH_q$ by $u \mapsto \widehat{u}$ where
\[\widehat{u}_j(x) = \sum_{n \in \bbZ^d} e^{-2 \pi i \langle n \odot q,x\rangle}u_{j+n\odot q}, \ x \in \bbT^d_*, \ j \in W.\]
As usual, this is initially defined for (say) $\ell^1$ vectors, but  has a unique extension to a unitary operator on $\ell^2$ via Plancherel.

\begin{prop}
The operator $\scrF_q$ is unitary. If $V$ is $q$-periodic, then 
\[
\scrF_q H \scrF_q^* = \int^\oplus_{\bbT^d_*} \widetilde{H}(x) \, \frac{dx}{|\bbT^d_*|},
\]
where $ \widetilde{H}(x)$ denotes the restriction of $H$ to $W$ with boundary conditions 
\begin{equation}\label{g1}
u_{n+m\odot q}  = e^{2\pi i \langle m\odot q,x\rangle } u_n, \  n,m\in\Z^d.
\end{equation}
\end{prop}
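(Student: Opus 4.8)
The plan is to treat the two assertions separately: prove unitarity of $\scrF_q$ by a Parseval argument, and then establish the fiber decomposition by computing the conjugated action of $A$ and $V$ on a dense subspace. Throughout, I would first carry out every manipulation on finitely supported $u \in \ell^2(\bbZ^d)$, where each rearrangement of sums is justified by absolute convergence (together with the decay $|a_n| \le Ce^{-\nu|n|}$ for the part involving $A$), and then extend to all of $\ell^2$ by continuity.

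For unitarity, the key observation is that $\bbZ^d$ decomposes as the disjoint union $\bigsqcup_{j \in W}(j + \Gamma)$ with $\Gamma = q\odot\bbZ^d$, so that for each fixed $j \in W$ the function $x \mapsto \widehat{u}_j(x)$ is exactly the Fourier series of the sequence $n \mapsto u_{j+n\odot q}$ indexed by $\Gamma \cong \bbZ^d$. Since $\langle n\odot q, \gamma^*\rangle \in \bbZ$ for $\gamma^* \in \Gamma^*$, the characters $\{x\mapsto e^{-2\pi i\langle n\odot q, x\rangle}\}_{n\in\bbZ^d}$ descend to $\bbT^d_* = \bbR^d/\Gamma^*$ and form an orthonormal basis of $L^2(\bbT^d_*, dx/|\bbT^d_*|)$. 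Parseval then gives $\int_{\bbT^d_*}|\widehat{u}_j(x)|^2\, dx/|\bbT^d_*| = \sum_{n\in\bbZ^d}|u_{j+n\odot q}|^2$, and summing over $j\in W$ yields $\|\scrF_q u\|_{\scrH_q}^2 = \sum_{m\in\bbZ^d}|u_m|^2 = \|u\|^2_{\ell^2}$. Surjectivity I would obtain by writing down the candidate inverse $(\scrF_q^* f)_{j+n\odot q} = \int_{\bbT^d_*} e^{2\pi i\langle n\odot q, x\rangle} f_j(x)\, dx/|\bbT^d_*|$ and checking $\scrF_q\scrF_q^* = I$ and $\scrF_q^*\scrF_q = I$ directly from the orthonormality above.

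For the direct integral, I would compute $\scrF_q H \scrF_q^*$ fiberwise. The potential is the easy case: $q$-periodicity gives $V_{j+n\odot q} = V_j$, so $[\scrF_q V u]_j(x) = V_j\, \widehat{u}_j(x)$, i.e.\ $V$ becomes fiberwise multiplication by the diagonal matrix $\diag(V_j)_{j\in W}$. For $A$ I would insert $[Au]_{j+n\odot q} = \sum_{j'\in W}\sum_{n'\in\bbZ^d} a_{(j-j')+(n-n')\odot q}\, u_{j'+n'\odot q}$, substitute $k = n-n'$, and factor the exponential as $e^{-2\pi i\langle n\odot q,x\rangle} = e^{-2\pi i\langle k\odot q,x\rangle}e^{-2\pi i\langle n'\odot q,x\rangle}$; this separates the sum and produces $[\scrF_q A u]_j(x) = \sum_{j'\in W} \widetilde{A}(x)_{j,j'}\,\widehat{u}_{j'}(x)$ with $\widetilde{A}(x)_{j,j'} = \sum_{k\in\bbZ^d} a_{(j-j')+k\odot q}\, e^{-2\pi i\langle k\odot q,x\rangle}$. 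Thus $\scrF_q H\scrF_q^*$ is the direct integral of $\widetilde{H}(x) = \widetilde{A}(x) + \diag(V)$. To close the argument I would verify that $\widetilde{H}(x)$ is precisely the restriction of $H$ to $W$ under the boundary condition \eqref{g1}: setting $z_\ell = e^{2\pi i q_\ell x_\ell}$ so that \eqref{g1} reads $u_{n+m\odot q} = z^m u_n$, a direct computation of $[Au]_j$ for such $u$ recovers exactly the matrix entries $\widetilde{A}(x)_{j,j'}$ (after the substitution $k\mapsto -k$), while $V$ acts diagonally regardless of the boundary condition.

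The computations themselves are routine reindexings justified by absolute convergence, so the only genuine points requiring care are (i) confirming that the characters $e^{-2\pi i\langle n\odot q,x\rangle}$ form an orthonormal basis on $\bbT^d_*$ with the normalized measure, which underlies both unitarity and the inversion formula, and (ii) the bookkeeping needed to identify the fiber matrix $\widetilde{H}(x)$ with the boundary-condition restriction of $H$, including the measurability and essential boundedness of $x \mapsto \widetilde{H}(x)$ that make the direct integral well defined — the latter following from the exponential decay of $\{a_n\}$, which renders each entry $\widetilde{A}(x)_{j,j'}$ an absolutely convergent, hence continuous (indeed analytic), function on $\bbT^d_*$.
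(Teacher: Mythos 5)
Your proof is correct and takes essentially the same approach as the paper, whose entire proof consists of the remark that unitarity of $\scrF_q$ follows from Parseval's formula and that the form of $\scrF_q H \scrF_q^*$ follows from a direct calculation. You have simply supplied the details the paper leaves to the reader: the coset decomposition of $\bbZ^d$ over $W$, the orthonormal character basis of $L^2(\bbT^d_*, dx/|\bbT^d_*|)$, and the reindexing that identifies the fiber matrix with the boundary-condition restriction $\widetilde{H}(x)$.
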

\begin{proof} Unitarity of $\scrF_q$ follows from Parseval's formula. The form of $\scrF_q H \scrF_q^*$ follows from a direct calculation.\end{proof}

Given $x \in \bbR^d$, let $\scrF^{x}$ be the Floquet-Bloch transform defined on $\bbC^W$ as follows: for any vector on $W$, $\{u(n)\}_{n\in W}$, we set
\[ [\scrF^{x} u]_l 
= \frac{1}{\sqrt{Q}} \sum_{n \in W} e^{-2\pi i  \sum_{j=1}^d  (\frac{l_j}{q_j}+x_j) n_j }u_n, \quad l\in W .\]

Therefore,
\[ [(\scrF^{x} )^*u]_l 
= \frac{1}{\sqrt{Q}} \sum_{n \in W} e^{2\pi i  \sum_{j=1}^d  (\frac{n_j}{q_j}+x_j) l_j }u_n, \quad l\in W .\]
Let $z_j=e^{2\pi i x_j}$, $j=1,2,\cdots,d$ and
define the Laurent series $p(z)$ by
\begin{equation}
p(e^{2\pi i x_1},e^{2\pi i x_2},\ldots, e^{2\pi i x_d}) ={\widehat{a}}(x_1,x_2,\ldots,x_d).
\end{equation} 
Using multi-index notation, we may rewrite this as
\begin{equation}\label{pzequation}
 p(z) = \widehat{a}(x) = \sum_{n \in \bbZ^d} e^{-2\pi i \langle n,x\rangle} a_n = \sum_{n \in \bbZ^d}  a_nz_1^{-n_1} z_2^{-n_2}\cdots z_d^{-n_d} = \sum_{n \in \bbZ^d} a_n z^{-n}.
\end{equation}
\begin{prop} \label{prop:floquetTransf}
Assume $V$ is $q$-periodic.
Then 
$\widetilde{H}(x)$ given by \eqref{g1}  is unitarily  equivalent to 	
$
D^z+B_V,
$
where $z_j = e^{2\pi i x_j}$, $D^z$ is a diagonal matrix with entries
\begin{equation}\label{D}
D^z(n,n^\prime) = p(\mu_{n}\odot z)\delta_{n,n^{\prime}},
\end{equation}
$\mu_n$ is the vector from \eqref{action}, and $B=B_V$ has entries related to the discrete Fourier transform of $V$ via
$$ B(n,n')=\widehat V\left(\frac{n_1-n'_1}{q_1},\ldots,\frac{n_{d}-n'_d}{q_d}\right).$$ 
\end{prop}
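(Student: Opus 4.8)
The plan is to diagonalize the periodic boundary-value problem \eqref{g1} by explicitly constructing an eigenbasis for the translation action and conjugating $\widetilde H(x)$ into that basis. First I would set up the boundary conditions carefully: a function $u$ satisfying \eqref{g1} is determined by its values $\{u_n\}_{n\in W}$, and the operator $\widetilde H(x) = \widetilde A(x) + V$ acts on $\bbC^W$. The potential part is diagonal (multiplication by $V_n$) and manifestly independent of $x$, so the content of the proposition is entirely about the Toeplitz part $\widetilde A(x)$, i.e.\ the restriction of $A$ to the quasiperiodic sector. I would record that $[\scrF^x]$ is a unitary change of basis on $\bbC^W$ (its adjoint is written out in the excerpt, so unitarity is a one-line Parseval check), and the proposition amounts to showing $\scrF^x \widetilde A(x)(\scrF^x)^* = D^z$ and $\scrF^x V (\scrF^x)^*= B_V$.

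The main computation I would carry out is the conjugation of $A$. Applying $A$ to a vector obeying \eqref{g1} and using translation-invariance $A_{n,m}=a_{n-m}$, I would reindex the lattice sum $[\widetilde A(x)u]_n = \sum_{m\in\bbZ^d} a_{n-m} u_m$ by writing each $m\in\bbZ^d$ uniquely as $m = m'+\ell\odot q$ with $m'\in W$, $\ell\in\bbZ^d$, and substituting the boundary condition $u_{m'+\ell\odot q} = e^{2\pi i\langle \ell\odot q,x\rangle} u_{m'}$. This collapses the full sum over $\bbZ^d$ into a sum over $W$ with kernel $\sum_\ell a_{n-m'-\ell\odot q} e^{2\pi i \langle \ell\odot q,x\rangle}$. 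Then I would apply $\scrF^x$; the phase $e^{-2\pi i \sum_j (l_j/q_j + x_j)n_j}$ is precisely engineered so that, after interchanging the $\bbZ^d$-sum back together, the kernel factors as a single Fourier series in $a_n$ evaluated at the shifted frequency $x_j + l_j/q_j$. Recognizing $\widehat a(x + (l/q)) = p(e^{2\pi i (x_j + l_j/q_j)}) = p(\mu_l\odot z)$ by \eqref{pzequation} and \eqref{action} gives exactly the diagonal entry \eqref{D}, with the Kronecker $\delta_{l,l'}$ coming from orthogonality of the characters on $W$.

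For the potential, I would conjugate the multiplication operator $V$ by $\scrF^x$ directly. Since $V$ is $q$-periodic, $V_n$ depends only on $n\in W$, and $\scrF^x V (\scrF^x)^*$ is a convolution-type operator whose matrix elements are $\tfrac1Q\sum_{n\in W} V_n e^{2\pi i\langle n,(l-l')/q\rangle}$, which is exactly $\widehat V((l-l')\odot q^{-1})$ up to the normalization in \eqref{eq:discretefourier}; this matches the stated form of $B_V$ after identifying the index shift. Combining the two pieces yields $\scrF^x\widetilde H(x)(\scrF^x)^* = D^z + B_V$, and since $\scrF^x$ is unitary this gives the claimed unitary equivalence.

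The step I expect to be the main obstacle is the reindexing in the conjugation of $A$: one must handle the passage from the sum over all of $\bbZ^d$ to the sum over the fundamental cell $W$ correctly, keeping the boundary-condition phases and the transform phases straight so that the frequency shift $x_j \mapsto x_j + l_j/q_j$ emerges cleanly and the off-diagonal terms cancel by character orthogonality. This is purely bookkeeping with multi-indices, but it is where sign and normalization errors are easy to make; everything else (unitarity of $\scrF^x$, the diagonal form of $V$, and matching to $\widehat V$) is routine. Since $A$ is finitely supported in our setting the $\ell$-sum is finite, so no convergence issues arise and the formal manipulations are rigorous.
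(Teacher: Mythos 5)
Your proposal is correct and takes essentially the same approach as the paper: both verify unitarity of $\scrF^x$, split $\widetilde{H}(x)$ into its Toeplitz part and the potential, and conjugate each piece by $\scrF^x$ via a direct Fourier computation, recognizing $\widehat{a}(x+l/q)=p(\mu_l\odot z)$ on the diagonal and the discrete Fourier transform $\widehat{V}$ for the potential. The only difference is bookkeeping: the paper verifies the intertwining identity $(\scrF^x)^*D^z u=\widetilde{H}_0(x)(\scrF^x)^*u$ directly, using that $(\scrF^x)^*u$ satisfies \eqref{g1} so the full $\bbZ^d$-sum produces the symbol at shifted frequencies at once, whereas you first fold the $\bbZ^d$-sum into a kernel on $W$ and then invoke character orthogonality---the same calculation in a different order.
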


\begin{remark}
In particular, $D^z$ depends on $A$ and is independent of $V$, while $B_V$ depends only on $V$ with no dependence on $A$.
\end{remark}

\begin{proof}[Proof of Proposition~\ref{prop:floquetTransf}]
By a direct calculation, we see that $\scrF^{x}$ is unitary, so	it suffices to prove that $D^z+B_V=  (\scrF^{x})\widetilde{H}(x) (\scrF^{x} )^*$.  
  Let $\widetilde{H}_0(x)$ be $\widetilde{H}(x)$ with the potential $V$ set to zero. We are going to show  $(\scrF^{x})\widetilde{H}_0(x) (\scrF^{x})^* =D^z$ and  $(\scrF^{x})V(\scrF^{x})^* =B$ separately.
To prove that $(\scrF^{x})\widetilde{H}_0(x) (\scrF^{x})^* =D^z$,
it suffices to show that for any $u=\{u_n\}_{n\in W}$,
\begin{equation*}
(\scrF^{x} )^*D^z u=\widetilde{H}_0(x) (\scrF^{x} )^*u .
\end{equation*}
It is worth mentioning that $(\scrF^{x} )^*u$ satisfies  \eqref{g1}  so that $\widetilde{H}_0(x)( \scrF^{x})^* u$ is well defined.
With the given definitions, for any $m\in W$, 
\begin{align}\nonumber
(\widetilde{H}_0(x)(\scrF^{x})^* u)_m
&=  \sum_{l\in\Z^d} a_{m-l }[(\scrF^{x} )^*u]_l \\
\nonumber
&= \frac{1}{\sqrt{Q}} \sum_{l\in\Z^d} a_{m-l}\sum_{n \in W} e^{2\pi i  \sum_{j=1}^d (\frac{n_j}{q_j}+x_j) l_j }u_n\\
\nonumber
&= \frac{1}{\sqrt{Q}} \sum_{l\in\Z^d} a_{l}\sum_{n \in W} e^{2\pi i  \sum_{j=1}^d  (\frac{n_j}{q_j} +x_j) (m_j-l_j) }u_n\\
&= \frac{1}{\sqrt{Q}} 
\sum_{n \in W} e^{2\pi i  \sum_{j=1}^d  (\frac{n_j}{q_j} +x_j) m_j } {\widehat{a}}\left(\frac{n_1}{q_1}+x_1, \ldots, \frac{n_d}{q_d}+x_d\right)  u_n.\label{g2}
\end{align}
Putting together \eqref{eq:characterActionDef} and \eqref{g2},
\begin{equation}\label{g4}
{\widehat{a}}\left(\frac{n_1}{q_1}+x_1, \ldots, \frac{n_d}{q_d}+x_d\right) = p(\mu_{n} \odot z) = D^z(n,n).
\end{equation}
 Similarly,
\begin{align}
((\scrF^{x} )^*D^z u)_m &= \frac{1}{\sqrt{Q}} \sum_{n \in W} e^{2\pi i  \sum_{j=1}^d (\frac{n_j}{q_j} +x_j) m_j }u_n D^z(n,n).\label{g3}
\end{align}
 By \eqref{g2}, \eqref{g3} and \eqref{g4}, we finish the proof of $(\scrF^{x})\widetilde{H}_0(x) (\scrF^{x} )^{*}=D^z$.

The proof of $(\scrF^{x})V(\scrF^{x} )^{*}=B$ is similar.
\end{proof}

\section{Proof of Theorem~\ref{t:blochIrr} and Examples} \label{sec:examples}

\begin{proof}[Proof of Theorem~\ref{t:blochIrr}]
The Bloch variety precisely consists of those $(k,\lambda)$ such that there is a nontrivial solution of $Hu=\lambda u$ satisfying the boundary conditions as in \eqref{eq:blochvarDef}.
Thus, with $D$ and $B$ as in Proposition~\ref{prop:floquetTransf}, the Bloch variety is the zero set of the Laurent polynomial $\mathcal{P}(z,\lambda)$ defined by \eqref{Pdef} where
\[\widetilde{\mathcal{P}}(z,\lambda) = \det(D^z+B-\lambda I).\]
After using the standard permutation expansion for this determinant, we see that $\widetilde{\mathcal{P}}$ is of the form \eqref{mathcalP} (with $p$ given via \eqref{pzequation}).
By a brief calculation, one can check that $\widetilde{\mathcal{P}}$ satisfies \eqref{eq:mathcalPCharInv}. Namely, if $S_m$ denotes the shift $e_n \mapsto e_{n+m}$ with addition computed modulo $\Gamma$, one can check that
\begin{align*}
\widetilde{\mathcal{P}}(\mu_mz,\lambda)
& = \det(D^{\mu_m z}+B-\lambda) \\
& = \det(S_m^* D^z S_m + B-\lambda).
\end{align*}
Since  $S_m^* B S_m = B$, \eqref{eq:mathcalPCharInv} follows.
Thus, the result follows from Theorem~\ref{mainthm}.
\end{proof}

Let us conclude by discussing a few examples of how to obtain the generator $p(z)$ for which Theorem~\ref{mainthm} is applicable. In particular, the examples below show that the framework of the present paper allows one to consider different discrete geometries. We start with the most basic example of the Laplacian on $\bbZ^d$, where \begin{equation*}
    [A \psi]_{n} 
= - \sum_{\|m-n\|_1=1}\psi_{m}.
\end{equation*}
In this case, it readily follows from (\ref{pzequation}) that
\begin{equation} \label{eq:sqsymbol}
p(z)=-\left(z_1+\frac{1}{z_1}+z_2+\frac{1}{z_2}+\cdots+z_d+\frac{1}{z_d}\right).
\end{equation}
\begin{proof}[Proof of Corollary~\ref{coro:square}]
From \eqref{eq:sqsymbol}, we see that the minimal degree component of $p$ is precisely
\begin{equation*}
h(z)=-\left(\frac{1}{z_1}+\frac{1}{z_2}+\cdots+\frac{1}{z_d}\right).
\end{equation*}
Here assumptions~\ref{assump1} and~\ref{assump2} are fulfilled with $\mathrm{deg}(h)=-1$.
\end{proof}

We then proceed to the description of a couple of two dimensional examples. The triangular lattice is given by specifying the vertex set 
\[\mathcal{V} 
= \{ nb_1+nb_2 : n,m \in \Z \}, \quad b_1 = \begin{bmatrix} 1 \\ 0 \end{bmatrix}, \ b_2 = \frac{1}{2}\begin{bmatrix} 1 \\ \sqrt{3} \end{bmatrix}\]
with edges given by $u\sim v \iff \|u-v\|_2 = 1$. Applying the shear transformation $b_1\mapsto b_1$, $b_2 \mapsto [0,1]^\top$, one can view this graph as having vertices in $\Z^2$ and
\[ u \sim v \iff u-v \in \{\pm e_1, \pm e_2, \pm(e_1-e_2)\}. \]

In particular, the nearest-neighbor Laplacian on the triangular lattice is equivalent to the operator $A_{\rm tri}:\ell^2(\bbZ^2) \to \ell^2(\bbZ^2)$ such that
\[[A_{\rm tri} \psi]_{n_1,n_2} 
= -\psi_{n_1-1,n_2}-\psi_{n_1+1,n_2}
-\psi_{n_1,n_2-1}-\psi_{n_1,n_2+1}
-\psi_{n_1-1,n_2+1}-\psi_{n_1+1,n_2-1}. \]

 Making use of \eqref{pzequation} one finds that 
\begin{equation} \label{eq:trisymb} 
p_{\rm tri}(z)=-\left(z_1+\frac{1}{z_1}+z_2+\frac{1}{z_2}+\frac{z_1}{z_2}+\frac{z_2}{z_1}\right).
\end{equation}
\begin{proof}[Proof of Corollary~\ref{coro:tri}]
 From \eqref{eq:trisymb}, we see that
 \[h_{\rm tri}(z)=-\frac{1}{z_1}-\frac{1}{z_2},\]
 from which it is trivial to check Assumptions~\ref{assump1} and \ref{assump2}.
 \end{proof}

Finally, in the Extended Harper Model
\begin{align*}
[A_{\rm EHM} \psi]_{n_1,n_2} 
=& -\psi_{n_1-1,n_2}-\psi_{n_1+1,n_2}
-\psi_{n_1,n_2-1}-\psi_{n_1,n_2+1}\\
&-\psi_{n_1-1,n_2+1}-\psi_{n_1+1,n_2-1}-\psi_{n_1-1,n_2-1}-\psi_{n_1+1,n_2+1}.
\end{align*}
Equation (\ref{pzequation}) now implies that
\[p_{\rm EHM}(z)=-\left(z_1+\frac{1}{z_1}+z_2+\frac{1}{z_2}+\frac{z_1}{z_2}+\frac{z_2}{z_1}+z_1z_2+\frac{1}{z_1z_2}\right)\]
The lowest degree component is $\frac{1}{z_1z_2}$. The proof of Corollary~\ref{coro:ehm} follows in just the same way as before; notice that we need the periods to be coprime to ensure that Assumption~\ref{assump2} is met.

\begin{remark} \label{rem:zdgraph}
To conclude, let us say a few words about when our results can be applied in the general setting of periodic operators on periodic graphs. That is, given a periodic graph and a periodic operator thereupon, when can one apply a suitable change of coordinates as in the case of the triangular lattice to reduce an operator on $\bbZ^d$? In short, this can be done to periodic operators on any $\bbZ^d$-periodic graph with transitive vertex action. Let us describe this in a little more detail.

Suppose $\mathcal{G}$ is a locally finite graph with vertices $\mathcal{V}$ on which $\bbZ^d$ acts freely by graph automorphisms. Denote the action of $n \in \bbZ^d$ on the vertex $u \in \mathcal{V}$ by $n + u$. This gives a unitary representation of $\bbZ^d$ viz.\ $[T^n\psi](u) = \psi(n+ u)$, $\psi \in \ell^2(\mathcal{V})$.
One may consider operators of the form $H = A+V$ acting in the natural Hilbert space $\ell^2(\mathcal{V})$, where
\begin{enumerate}
\item[(1)] $A$ commutes with the $\bbZ^d$-action (that is, $AT^n=T^nA$ for all $n \in \bbZ^d$)
\item[(2)] $V$ is diagonal, that is, $[V\psi](u) = V(u)\psi(u)$ for a suitable function $V:\mathcal{V} \to \bbC$.
\item[(3)] $V$ is invariant under the action of a full-rank subgroup of $\bbZ^d$, that is, there is a subgroup $F \leq \bbZ^d$ of rank $d$ such that $VT^m=T^mV$ for all $m \in F$.
\end{enumerate}
In item (3), notice that one can always take $F$ to be of the form $q_1\bbZ \oplus \cdots \oplus q_d \bbZ$ for some $q \in \bbN^d$, in which case we say $V$ is $q$-periodic as before.
If in addition, $\bbZ^d$ acts transitively on $\mathcal{V}$, then choosing arbitrarily some $u_0 \in \mathcal{V}$, one has a one-to-one correspondence $\bbZ^d \to \mathcal{V}$ via $n \mapsto n+ u_0$. Of course, this induces a unitary operator $Q:\ell^2(\bbZ^d) \to \ell^2(\mathcal{V})$ in an obvious manner. In this case, it is clear that $Q^{*}HQ$ is a periodic operator of the form studied in the present paper.
\end{remark}

\bibliographystyle{abbrv}
\bibliography{FLMbib}

\end{document}